\def\tran{^{\mbox{\scriptsize T}}}
\DeclareMathOperator*{\argmin}{arg\,min}
\newtheorem{theorem}{Theorem}
\newtheorem{lemm}{Lemma}
\newtheorem{Definition}{Definition}
\newtheorem{remark}{Remark}
\newtheorem{Pro}{Proposition}
\newtheorem{corollary}{Corollary}
\def\blue{\textcolor{blue}}
\def\yellow{\textcolor{yellow}}
\def\gray{\textcolor{gray}}
\newcommand\blfootnote[1]{%
  \begingroup
  \renewcommand\thefootnote{}\footnote{#1}%
  \addtocounter{footnote}{-1}%
  \endgroup
}
\begin{document}
\sloppy

\title{Status Updating with an Energy Harvesting Sensor under Partial Battery Knowledge}

\author{\IEEEauthorblockN{Mohammad Hatami\IEEEauthorrefmark{1}, Markus Leinonen\IEEEauthorrefmark{1}, and Marian Codreanu\IEEEauthorrefmark{2}}
}



\maketitle

\begin{abstract}
We consider status updating under inexact knowledge of the battery level of an energy harvesting (EH) sensor that sends status updates about a random process to users via a cache-enabled edge node. More precisely, the control decisions are performed by relying only on the battery level knowledge captured from the last received status update packet.
Upon receiving on-demand requests for fresh information from the users, the edge node uses the available information to decide whether to command the sensor to send a status update or to retrieve the most recently received measurement from the cache. We seek for the best actions of the edge node to minimize the average AoI of the served measurements, i.e., \textit{average on-demand AoI}. 
{Accounting for the partial battery knowledge}, we model the problem as a partially observable Markov decision process (POMDP), {and, through characterizing its key structures,} develop a dynamic programming algorithm to obtain an optimal policy.
Simulation results illustrate the threshold-based structure of an optimal policy and show the gains obtained by the proposed optimal POMDP-based policy compared to 
a request-aware greedy (myopic) policy.

\end{abstract}


\section{Introduction}
\blfootnote{\IEEEauthorrefmark{1}Centre for Wireless Communications, University of Oulu, Finland. \\\IEEEauthorrefmark{2}Department of Science and Technology, Link\"{o}ping University, Sweden.\\
An extended version of the work including all the proofs can be found in \cite{hatami2022POMDP_extended}.}

{In future Internet of things (IoT) systems in 5G and 6G wireless generations, timely delivery of status updates about a remotely monitored random process to a destination is the key enabler for the emerging time-critical applications, e.g., drone control and smart home systems. Such destination-centric information freshness is quantified by the \textit{age of information} (AoI) \cite{AoI_Orginal_12,sun2019age}.} {IoT networks with low-power sensors are subject to stringent energy limitations, which is often counteracted by \textit{energy harvesting} (EH) technology. Thus, there is a need for designing \textit{AoI-aware status updating} procedures that provide the end users with timely status of remotely observed processes while account for the limited energy resources of EH sensors.}

We consider a status update system consisting of {an EH sensor, an edge node, and users}. The users {are interested in time-sensitive information about a random process measured by the sensor}. The users send requests to the edge node  which has a cache storage to store the most recently received measurements from the sensor. To serve a user's request, the edge node either commands the sensor to send a fresh measurement, i.e.,  a status update packet, or uses the aged data in the cache. 
{In contrast to the existing works (e.g., \cite{Hatami-etal-20,hatami2020aoi,hatami2021spawc,hatami2022JointTcom}),} we consider a practical scenario where the edge node is informed of the sensor's battery level only via received status update packets, leading to \textit{partial} battery knowledge. Particularly, our objective is to find the best actions of the edge node to minimize the average AoI of the served measurements, i.e., \textit{average on-demand AoI}. Accounting for the {partial} battery knowledge, we model this as a partially observable Markov decision process (POMDP) problem. We convert the POMDP into a belief-state MDP and, {via characterizing its key structures}, develop {an iterative algorithm to obtain} an optimal policy. Numerical experiments illustrate the threshold-based structure of an optimal policy and show the superiority of the proposed optimal POMDP-based policy compared to a request-aware greedy policy. 
{Only a few works have applied POMDP formulation in AoI-aware design \cite{yao2020age_ISIT,gong2020AoI-Random-Arrival,shao2020partially,stamatakis2022semantics}.} In \cite{yao2020age_ISIT}, the authors proposed POMDP-based AoI-optimal {transmission scheduling in a status update system under an average energy constraint and uncertain channel state information.}
{In \cite{gong2020AoI-Random-Arrival}, the authors proposed a POMDP-based AoI-optimal scheduling policy for a multiuser uplink system under partial knowledge of the status update arrivals at the monitor node.}
{In \cite{shao2020partially},  the authors investigated AoI-optimal scheduling in a wireless sensor network where the AoI values related to the sensors are not directly observable to the access point.}
In \cite{stamatakis2022semantics}, the authors derived an efficient policy for sensor probing in an IoT network with intermittent faults and inexact knowledge about the status (healthy or faulty) of the system.

To the best of our knowledge, this is the first work that proposes an optimal policy for (on-demand) AoI minimization with an EH sensor, where the decision-making relies 
only on partial battery knowledge about the sensor's battery level. 
\section{System Model and Problem Formulation}\label{sec_systemmodel}


\subsection{Network Model}\label{sec_network}
We consider a status update system, where an energy harvesting (EH) sensor sends status updates about the monitored random process to users via a cache-enabled {gateway}, as depicted in Fig.~\ref{fig_systemmodel}. This models, e.g., an IoT sensing network, where the {gateway} represents an edge node; we refer to the gateway as the edge node henceforth. A time-slotted system with slots ${t \in \mathbb{N}}$ is considered.  


{We consider request-based status updating, where, at the beginning of slot $t$, users request for the status of the sensor (i.e., a new measurement) from the edge node. The edge node, which has a cache that stores the most recently received \textit{status update} from the sensor,  handles the arriving requests during the same slot $t$ by the following procedure.} {Let $r(t) \in \{0,1\}$, $t=1,2,\dots$ denote the random process of requesting the status of the sensor at slot $t$; $r(t) = 1$ if the status is requested (by at least one user)  and $r(t)=0$ otherwise. The requests are independent across time slots and the probability of having a request at each time slot is ${p=\mathrm{Pr}\{r(t) = 1\}}$. Upon receiving a request at slot $t$, the edge node serves the requesting user(s) by 1) commanding the sensor to send a fresh status update or 2) using the stored measurement from the cache. Let $a(t) \in \mathcal{A} = \{0,1\}$ be the \textit{command action} of the edge node at slot $t$; $a(t)=1$ if the edge node commands the sensor to send an update and $a(t)=0$ otherwise.}


\subsection{Energy Harvesting Sensor}\label{EH_model}
The sensor relies on the energy harvested from the environment. 
We model the energy arrivals $e(t) \in \left\lbrace 0 ,1\right\rbrace $, $t = 1,2,\dots$, as a Bernoulli process with rate $\lambda=\Pr\{ e(t) = 1 \}$, $\forall t$.
The sensor stores the harvested energy into a battery of finite size $B$ (units of energy). Let $b(t)$ denote the battery level of the sensor at the beginning of slot $t$, where ${b(t) \in \{0,\ldots,B\}}$.

We assume that transmitting a status update from the sensor to the edge node consumes one unit of energy (see, e.g., \cite{hatami2020aoi}).
Once the sensor is commanded (i.e., $a(t)=1$), the sensor sends an update if its battery is non-empty (i.e., $b(t) \geq 1$).
Let $d(t) \in \left\lbrace 0 ,1\right\rbrace$ denote the \textit{action of the sensor} at slot $t$; $d(t)=1$ if the sensor  sends a status update and $d(t)=0$ otherwise. Thus, $d(t) =  a(t) \mathds{1}_{\{b(t) \geq 1\}}$, where $\mathds{1}_{\{\cdot\}}$ is the indicator function. 

The evolution of the battery level is given by
\begin{equation}\label{battery_evo}
b(t+1) = \min\left\lbrace  b(t)+e(t)-d(t) , B \right\rbrace.
\end{equation}

\begin{figure}[t!]
\centering
{\includegraphics[width=0.7\columnwidth]{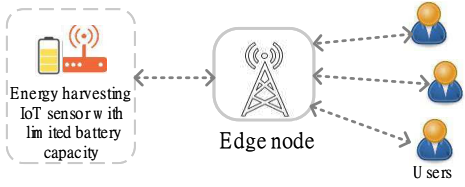}}\vspace{-2mm}
\caption{A status update system, where an EH sensor sends status updates about the monitored random process to users via a cache-enabled edge node (i.e., the gateway).}
\label{fig_systemmodel}\vspace{-5mm}
\end{figure}

\subsection{Status Updating with Partial Battery Knowledge}\label{sec_system_model_partial_battery}
We consider that the edge node is informed about the sensor's battery level (only) via the received \textit{status update packets}. {Considering this realistic setting is in stark contrast to the previous works on AoI-aware design which all assume perfect battery knowledge available at each time slot.}

Each status update packet consists of the measured value of the sensor, a  time  stamp  representing  the time when the sample was generated, and the current battery level of the sensor. Consequently, the edge node has only \textit{partial} knowledge about the battery level at each time slot, {i.e.,  \textit{outdated} knowledge based on the sensor’s last update.} 
Formally, let {$\tilde{b}(t) \in \{1,2,\dots,B\}$} denote the \emph{knowledge} about the battery level of the sensor at the edge node at slot $t$.
At slot $t$, let $u(t)$ denote the most recent slot in which the edge node received a status update packet, i.e., $u(t) = \max \{t'| t'<t, d(t') = 1 \}$.  Thus, $\tilde{b}(t) = b(u(t))$.

\subsection{On-demand Age of Information}\label{sec_AoI}
{To account for the request-based status updating, we use the notion of age of information (AoI) \cite{AoI_Orginal_12} and measure the freshness of information seen by the users via \textit{on-demand AoI} \cite{hatami2020aoi,hatami2021spawc}.}
Let $\Delta(t)$ be the AoI about the monitored random process at the edge node at the beginning of slot $t$, i.e., the number of slots elapsed since the generation of the most recently  received status update {at} the sensor. Thus, the AoI is defined as $\Delta(t) = t - u(t)$.
We make a common assumption  
that $\Delta(t)$ is upper-bounded by a finite value $\Delta^{\mathrm{max}}$, i.e.,  $\Delta(t) \in \{1, 2,\ldots ,\Delta^{\mathrm{max}}\}$. 
Besides tractability, this accounts for the fact that once the available measurement becomes excessively stale, further counting would be irrelevant.
The evolution of  $\Delta(t)$ is expressed as
\begin{equation}\label{eq_AoI}
\Delta(t+1)=
\begin{cases}
1,&\text{if} ~d(t)=1, \\
\min \{\Delta(t)+1,\Delta^{\mathrm{max}}\},&\text{if}~d(t)=0.
\end{cases}
\end{equation}
We define on-demand AoI at slot $t$ as 
\begin{equation}\label{on-demand-AoI}
\begin{array}{ll}
\Delta^\mathrm{OD}(t)  & \hspace{-2mm}\triangleq r(t) \Delta(t+1)\\
&\hspace{-2mm} = r(t) \min \{ (1-d(t)) \Delta(t)+1,\Delta^{\mathrm{max}}\}.
\end{array}
\end{equation}
In \eqref{on-demand-AoI}, since the requests come at the beginning of slot $t$ and the edge node sends measurements to the users at the end of the same slot, $\Delta(t+1)$ is the AoI seen by the users.

\subsection{Problem Formulation}\label{sec_cost}
We aim to find  the best action of the edge node at each time slot, i.e., $a(t)$, $t = 1,2,\ldots$, called an \textit{optimal policy}, that minimizes the average cost (i.e., on-demand AoI), defined as
\begin{equation}\label{eq_average_cost_persensor}
\bar{C}=\lim_{T\rightarrow\infty} \frac{1}{T}\textstyle\sum_{t=1}^{T} \mathbb{E} [\Delta^\mathrm{OD}(t)],
\end{equation}
{where the expectation is taken over all system dynamics.}

\section{POMDP Modeling} 

We model the problem of finding an optimal policy as a 
POMDP and propose an iterative algorithm that finds such an optimal policy. 
The POMDP is defined by a tuple $(\mathcal{S},\mathcal{O},\mathcal{A},\Pr(s(t+1)| s(t), a(t)),\Pr(o(t)| s(t), a(t-1)),c(s(t),a(t)))$ \cite[Chap.~7]{sigaud2013markov}, with the following elements.

\noindent $\bullet$ \textbf{State Space} $\mathcal{S}$:
Let $s(t) \in \mathcal{S}$ denote the system state at slot $t$, which is deﬁned as $s(t) = (b(t),r(t), \Delta(t),\tilde{b}(t))$.
The state space $\mathcal{S}$ has a finite dimension $|\mathcal{S}| = 2B(B+1)\Delta^{\mathrm{max}}$. We denote the \textit{observable} (visible by the edge node) part of the state by $s^{\mathrm{v}}(t) = (r(t), \Delta(t),\tilde{b}(t))$, thus $s(t) = (b(t),s^{\mathrm{v}}(t))$.

\noindent $\bullet$ \textbf{Observation Space} $\mathcal{O}$:
Let $o(t) \in \mathcal{O}$ be the edge node's observation about the system state at slot $t$, which is defined as the visible part of the state; $o(t) = s^\mathrm{v}(t)$. The observation space $\mathcal{O}$ has a finite dimension $ |\mathcal{O}| = 2B\Delta^\mathrm{max}$.

\noindent $\bullet$ \textbf{Action Space} $\mathcal{A}$:
At each time slot, the edge node decides whether to command the sensor
or not, i.e., $a(t) \in \mathcal{A} = \{0,1\}$.

\noindent $\bullet$ \textbf{State Transition Function} $\Pr(s(t+1)| s(t), a(t))$:
The state transition {function} maps a state-action pair at slot $t$ onto a distribution of states at  slot $t + 1$; the probability of transition from current state {($s(t)$) $s =(b, r, \Delta, \tilde{b})$} to next state ($s(t+1)$) $s^\prime = (b^\prime ,r^\prime,  \Delta^\prime,\tilde{b}^\prime)$  under action $a(t) = a$ is given by
\begin{equation}\label{eq_stp}
\begin{array}{ll}
& \Pr(b^\prime,r^\prime,  \Delta^\prime,\tilde{b}^\prime |b, r,  \Delta,\tilde{b} , a ) =\\ 
& \text{Pr} \big(r^\prime\big)\text{Pr} (b^\prime,\mid b , a) \text{Pr}(\Delta^\prime,\tilde{b}^\prime \mid b,\tilde{b}, \Delta,a),
\end{array}
\end{equation}
where
\begin{equation}\notag
\begin{array}{ll}
&\text{Pr} (r^\prime) = \left\lbrace 
    \begin{array}{ll}
    p, & r^\prime = 1,\\
    1-p, & r^\prime = 0,\\
    0, & \mbox{otherwise.}
    \end{array}
\right.
\end{array}
\end{equation}
\begin{equation}\notag
\begin{array}{ll}
&\text{Pr} (b^\prime\mid b = B , a = 0) = \mathds{1}_{\{b^\prime=B\}},\\
&\text{Pr} (b^\prime\mid  b < B , a = 0) = \left\lbrace 
    \begin{array}{ll}
    {\lambda,} & b^\prime = b + 1,\\
    {1-\lambda,} & b^\prime = b ,\\
    0, & \mbox{otherwise.}
    \end{array}
\right.\\
&\text{Pr} (b^\prime \mid b = 0 , a = 1) = \left\lbrace 
    \begin{array}{ll}
    {\lambda,} & b^\prime = 1,\\
    {1-\lambda,} & b^\prime = 0, \\
    0, & \mbox{otherwise.}
    \end{array}
\right.\\
&\text{Pr} (b^\prime\mid  b\geq1 , a = 1) = \left\lbrace
    \begin{array}{ll}
    {\lambda,} & b^\prime = b,\\
    {1-\lambda,} & b^\prime = b - 1,\\
    0, & \mbox{otherwise.}
    \end{array}
\right.
\end{array}
\end{equation}
\begin{equation}\notag
\hspace{-5mm}
\begin{array}{ll}
&\text{Pr} (\Delta^\prime, \tilde{b}^\prime\mid  b,\tilde{b}, \Delta, a = 0) = \mathds{1}_{\{\Delta^\prime= \min\{\Delta+1,\Delta^{\mathrm{max}}\}, \tilde{b}^\prime = \tilde{b}\}},\\
&\text{Pr} (\Delta^\prime, \tilde{b}^\prime\mid  b = 0 , \tilde{b}, \Delta, a = 1 ) = \mathds{1}_{\{\Delta^\prime= \min\{\Delta+1,\Delta^{\mathrm{max}}\}, \tilde{b}^\prime = \tilde{b}\}},\\
&\text{Pr} (\Delta^\prime, \tilde{b}^\prime\mid  b \geq 1 , \tilde{b}, \Delta, a = 1 ) = \mathds{1}_{\{\Delta^\prime= 1, \tilde{b}^\prime = b\}}.
\end{array}
\notag
\end{equation}
\noindent $\bullet$ \textbf{Observation Function}: 
The observation function is given by $\text{Pr} (o(t)| s(t),a(t-1)) = \mathds{1}_{\{ o(t) = s^{\mathrm{v}}(t)\}}$.

\noindent $\bullet$ \textbf{Immediate Cost Function} $c(s,a)$:
This is the expected one-step cost of taking action $a$ in state $s = (b,r,\Delta,\tilde{b})$,  which is calculated using \eqref{on-demand-AoI}, i.e., $c(s,a) = r[(1-a\mathds{1}_{\{b\geq1\}})\Delta+1]$.

\subsection{Belief-State 
}
In the POMDP formulation above, the system state $s(t)$
is not fully observable for the edge node -- the decision maker -- at slot $t$. In particular, the state consists of two parts as $s(t) = \{b(t),s^{\mathrm{v}}(t)\}$. Accordingly, at slot $t$, the exact battery $b(t)$ is unknown
to the edge node, whereas the request, AoI, and partial battery knowledge {captured by $s^{\mathrm{v}}(t)$} are observable.
{This incomplete state information in a POMDP causes challenges in the decision making, because the edge node can make decisions only based on the available/observed information or on the quantities derived from that information.} 

To counteract the insufficiency in the state information, we need to define state-like quantities that preserve the Markov property and summarize all the necessary information for the edge node  -- called \textit{sufficient information states}
--
in respect to searching for an optimal policy \cite[Chapter~7]{sigaud2013markov}.
{One sufficient information state is a \textit{belief-state.} We
define the belief-state at slot $t$ as $z(t) = \{\boldsymbol{\beta}(t), s^{\mathrm{v}}(t)\} \in {\mathcal{Z}(t)}$, where $\boldsymbol{\beta}(t)$ is \textit{belief}\footnote{{Note that, in general, the belief of a POMDP is a probability distribution over the entire state space $\mathcal{S}$. However, because $s^{\mathrm{v}}(t)$ is fully observable in our problem, it has no uncertainty to be modelled via a belief.}} and $\mathcal{Z}(t)$ is the belief-state space. The belief at slot $t$ is a $({B+1})$-dimensional vector ${\boldsymbol{\beta}(t) = (\beta_{0}(t),\dots,\beta_{B}(t))\tran{\in \mathcal{B}}}$, representing the probability distribution on the possible values of battery levels, where $\mathcal{B} \subset \mathbb{R}^{(B+1) \times 1}$ is the belief space. 
{Let $\phi^\mathrm{c}(t)$ be the so-called \textit{complete information state} at slot $t$, which consists of an initial probability distribution over the states, and the complete history of observations and actions up to slot $t$, i.e., $\{o(1),\dots,o(t),a(1),\dots,a(t-1)\}$.} Formally, the belief $\boldsymbol{\beta}(t)$ represents the conditional probability that the battery level is at a specific level given the complete information state $\phi^\mathrm{c}(t)$; thus, the entries of $\boldsymbol{\beta}(t)$ are defined as} 
\begin{equation}\label{eq_def_update}
    \beta_{j}(t) = \Pr(b(t) = j \mid \phi^\mathrm{c}(t)),{~j \in \{0,1,\dots,B\}.}
\end{equation} 


The belief is updated at 
each time slot based on the previous belief, the current observation, and the previous action, i.e., $\boldsymbol{\beta}(t+1) = \tau(\boldsymbol{\beta}(t),o(t+1),a(t))$, where $\tau(\cdot)$ is the belief update function, {given by the following proposition}.

\begin{Pro}\label{lemma_beleief_update}
The belief update function $\tau(\cdot)$ is given by
\begin{align}\label{eq_belief_update}
&\boldsymbol{\beta}(t+1) =  \tau(\boldsymbol{\beta}(t),o(t+1),a(t)) = \notag \\ 
&\hspace{-2mm} \left\lbrace 
\begin{array}{ll}
\hspace{-2.5mm}\boldsymbol{\Lambda}\boldsymbol{\beta}(t),\left\lbrace 
\begin{array}{ll}\hspace{-2mm} a(t) = 0,\\ \hspace{-2mm}o(t+1) = \{r(t+1),\min\{\Delta(t)+1,\Delta^{\mathrm{max}}\},\tilde{b}(t)\},
\end{array}
\right.\\ \boldsymbol{\rho}^{0},\hspace{2.8mm}\left\lbrace 
\begin{array}{ll}
\hspace{-2mm} a(t) = 1,\\ \hspace{-2mm} o(t+1) = \{r(t+1),\min\{\Delta(t)+1,\Delta^{\mathrm{max}}\},\tilde{b}(t)\},
\end{array}
\right.\\
\boldsymbol{\rho}^{1},\hspace{2.8mm}\left\lbrace 
\begin{array}{ll}\hspace{-2mm} a(t) = 1,\\\hspace{-2mm} o(t+1) = \{r(t+1),1,1\},
\end{array}
\right.\\
\cdots\\
\boldsymbol{\rho}^{B},\hspace{2.8mm}\left\lbrace 
\begin{array}{ll}\hspace{-2mm} a(t) = 1,\\\hspace{-2mm} o(t+1) = \{r(t+1),1,B\},
\end{array}
\right.
\end{array}
\right.
\end{align}
where 
the matrix $\boldsymbol{\Lambda} \in \mathbb{R}^{(B+1)\times(B+1)}$ {is a left stochastic matrix, having a banded form as}
\begin{equation}\label{eq_matrix_lambda}
\boldsymbol{\Lambda} = 
\begin{pmatrix}
1-\lambda & 0 & \cdots & 0 & 0 \\
\lambda & 1-\lambda & \cdots & 0 & 0\\
\vdots  & \vdots  & \ddots & \vdots & \vdots  \\
0 & 0 & \cdots & 1-\lambda & 0\\
0 & 0 & \cdots & \lambda & 1
\end{pmatrix},
\end{equation}
and the vectors $\boldsymbol{\rho}^{0},\boldsymbol{\rho}^{1},\boldsymbol{\rho}^{2},\ldots,\boldsymbol{\rho}^{B}$ are given by
\begin{equation}\label{eq_vectors_rho}
\begin{array}{ll}
    & {\boldsymbol{\rho}^{0}} = \boldsymbol{\rho}^{1} = \big(
    1-\lambda,\lambda,0,0,\ldots,0,0\big) \tran \\
    &\boldsymbol{\rho}^{2} = 
    \big(0,1-\lambda,\lambda,0,\ldots,0,0\big) \tran \\
    & \ldots\\
     &\boldsymbol{\rho}^{B} = 
    \big(0,0,0,0,\ldots,1-\lambda,\lambda\big) \tran.
\end{array}
\end{equation}
\end{Pro}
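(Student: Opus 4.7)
The plan is to apply Bayes' rule to express $\boldsymbol{\beta}(t+1)$ in terms of $\boldsymbol{\beta}(t)$, the action $a(t)$, and the new observation $o(t+1)$, and then to perform a case split on $a(t)$ and, when $a(t)=1$, on whether the observation reveals a newly received status update. Concretely, I would start from $\beta_j(t+1)=\Pr(b(t+1)=j\mid \phi^{\mathrm{c}}(t),a(t),o(t+1))$, marginalize over $b(t)$, and exploit two structural facts: (i)~given $b(t)$ and $a(t)$, the next battery level depends only on the independent energy arrival $e(t)$, while $o(t+1)$ depends only on $b(t)$, $a(t)$, $s^{\mathrm{v}}(t)$, and the independent request $r(t+1)$; (ii)~since the policy chooses $a(t)$ from the history only, $\Pr(b(t)=i\mid\phi^{\mathrm{c}}(t),a(t))=\beta_i(t)$. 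These two facts turn the Bayes expression into a sum whose summand factors as a one-step battery transition kernel, an observation likelihood, and $\beta_i(t)$, with the normalizer absorbing the $r(t+1)$ marginal.

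In the case $a(t)=0$ the sensor is silent, so $\Delta$ increments and $\tilde{b}$ is preserved deterministically; the observation likelihood is therefore independent of $b(t)$ and cancels against the normalizer. The posterior collapses to the pure predictive step $\beta_j(t+1)=\sum_i \Pr(b(t+1)=j\mid b(t)=i,a(t)=0)\,\beta_i(t)$, and a direct comparison with \eqref{battery_evo} shows that the $(j,i)$-entry of this kernel is precisely the $(j,i)$-entry of $\boldsymbol{\Lambda}$ in \eqref{eq_matrix_lambda}, so the update reduces to $\boldsymbol{\Lambda}\boldsymbol{\beta}(t)$. Left-stochasticity of $\boldsymbol{\Lambda}$ is immediate by inspection of \eqref{eq_matrix_lambda} (each column sums to one) and keeps $\boldsymbol{\beta}(t+1)$ a valid probability vector without further normalization.

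In the case $a(t)=1$ the observation is highly informative about $b(t)$. If $\tilde{b}(t+1)=\tilde{b}(t)$ and $\Delta(t+1)=\min\{\Delta(t)+1,\Delta^{\mathrm{max}}\}$, the command produced no transmission, which under $a(t)=1$ is only compatible with $b(t)=0$; the likelihood then puts all its mass at $i=0$, the posterior on $b(t)$ collapses to a point mass at $0$, and pushing it one step with $b(t+1)=e(t)$ yields $\boldsymbol{\rho}^{0}$. If instead $\Delta(t+1)=1$ and $\tilde{b}(t+1)=k$ for some $k\in\{1,\ldots,B\}$, the packet-content rule $\tilde{b}^{\prime}=b$ encoded in \eqref{eq_stp} forces $b(t)=k$; propagating one step with $b(t+1)=k-1+e(t)$ then yields $\boldsymbol{\rho}^{k}$. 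The identity $\boldsymbol{\rho}^{0}=\boldsymbol{\rho}^{1}$ arises as a natural consistency check rather than a separate case, since $b(t)=0$ (no transmission) and $b(t)=1$ (transmission of the last unit) produce the same one-step distribution $e(t)$.

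The main subtlety will be making the reduction of conditioning on the full history $\phi^{\mathrm{c}}(t)$ to conditioning on $\boldsymbol{\beta}(t)$ fully rigorous, i.e.\ arguing cleanly that $\boldsymbol{\beta}(t)$ is a sufficient statistic for the hidden battery at time $t$; this is the standard sufficient-information-state property of POMDP belief-states, but it is the one step that deserves an explicit justification. Once that is in place, the remainder of the argument is routine bookkeeping against the transition kernels in \eqref{eq_stp}, the battery evolution \eqref{battery_evo}, and the definitions of $\boldsymbol{\Lambda}$ and $\{\boldsymbol{\rho}^{j}\}$.
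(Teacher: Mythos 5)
Your proposal is correct and follows essentially the same route as the paper's proof: Bayes' rule applied to the belief definition, marginalization over $b(t)$ using the two conditional-independence facts you list, cancellation of the observation likelihood against the normalizer when $a(t)=0$, and the same case split for $a(t)=1$ on whether the observation reveals a fresh update (forcing $b(t)=0$ or $b(t)=\tilde{b}(t+1)$). The sufficiency subtlety you flag is moot here, because the belief is defined in \eqref{eq_def_update} directly as a conditional probability given $\phi^{\mathrm{c}}(t)$, so the paper simply computes $\Pr(b(t+1)=j\mid\phi^{\mathrm{c}}(t+1))$ in terms of $\Pr(b(t)=i\mid\phi^{\mathrm{c}}(t))$ without invoking any separate sufficient-statistic argument.
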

\begin{proof}
{The proof follows from \eqref{eq_def_update} and the Bayes' theorem. The details is presented in Appendix~\ref{sec_appendix_lemma_beleief_update}.}
\end{proof}

\section{Optimal Policy and Proposed Algorithm}
In this section, we find an optimal policy for the POMDP formulation.
Formally, a policy $\pi$ is a mapping from the belief-state space to the action space, i.e., $\pi: \mathcal{Z} \rightarrow \mathcal{A}$; the action taken in a belief-state $z(t) = z$ under policy $\pi$ is denoted by $\pi(z)$.
Under a policy $\pi$, the average cost is given by (see \eqref{eq_average_cost_persensor})
\begin{equation}\label{eq_average_cost_policy}
\bar{C}_{\pi}=\lim_{T\rightarrow\infty} \frac{1}{T}\textstyle\sum_{t=1}^{T} \mathbb{E}_{\pi}[\Delta^{\mathrm{OD}}(t) \mid z(0)],
\end{equation}
where $\mathbb{E}_{\pi}[\cdot]$ denotes the expected value of $c(t)$ given that the edge node follows $\pi$, and $z(0)$ is the initial belief-state.
{In accordance with} Section~\ref{sec_cost}, our objective is to find an optimal policy $\pi^*$ that minimizes \eqref{eq_average_cost_policy}, i.e., $\pi^* = \argmin_{\pi} \bar{C}_{\pi}$.

\begin{table}[t]
\centering
\caption{The belief space $\mathcal{B}$. {The \textit{row} and \textit{column} numbers are used to represent each belief, e.g., $\boldsymbol{\Lambda}^2\boldsymbol{\beta}$ is represented by $(0,2)$.}}
\label{tab:belief_set}
\scalebox{1}{
\begin{adjustbox}{width=1\columnwidth,center}
\begin{tabular}{c:c  c  c  c c c c}
&\gray{0}&\gray{1}&\gray{2}&\gray{3}&$\dots$&$\gray{M}$&\dots \\
\hdashline
\noalign{\vskip 0.15cm}
\gray{0}&$\boldsymbol{\beta}(1)$ & $\boldsymbol{\Lambda}\boldsymbol{\beta}(1)$ & $\boldsymbol{\Lambda}^2\boldsymbol{\beta}(1)$ & $\boldsymbol{\Lambda}^3\boldsymbol{\beta}(1)$ & $\dots$ & $\boldsymbol{\Lambda}^{M}\boldsymbol{\beta}(1)$ & $\dots$ \\
\gray{1}&$\boldsymbol{\rho}^{1}$ & $\boldsymbol{\Lambda}\boldsymbol{\rho}^{1}$ & $\boldsymbol{\Lambda}^{2}\boldsymbol{\rho}^{1}$ & $\boldsymbol{\Lambda}^{3}\boldsymbol{\rho}^{1}$ & $\dots$ & $\boldsymbol{\Lambda}^{M}\boldsymbol{\rho}^{1}$ & $\dots$\\
$\cdot$&$\cdot$ & $\cdot$ & $\cdot$ & $\cdot$ & $\cdot$ & $\cdot$ &  $\dots$ \\
$\gray{B}$&$\boldsymbol{\rho}^{B}$ & $\boldsymbol{\Lambda}\boldsymbol{\rho}^{B}$ & $\boldsymbol{\Lambda}^{2}\boldsymbol{\rho}^{B}$ & $\boldsymbol{\Lambda}^{3}\boldsymbol{\rho}^{B}$ & $\dots$ & $\boldsymbol{\Lambda}^{M}\boldsymbol{\rho}^{B}$ & $\dots$\\
&\multicolumn{6}{c}{\upbracefill}\\
&\multicolumn{6}{c}{$\scriptstyle \hat{\mathcal{B}}$}\\
\end{tabular}
\end{adjustbox}
}
\vspace{-6mm}
\end{table}

\begin{theorem}\label{theorem_bellman_eq}
An optimal policy $\pi^*$ is obtained by solving the
following equations
\begin{equation}\label{eq_ballman_pomdp}
\bar{C}^* + h(z) = \textstyle\min_{a \in \mathcal{A}} Q(z,a), z \in \mathcal{Z},
\end{equation}
where $h(z)$ is a relative value function, $\bar{C}^*$ is the optimal average cost achieved by $\pi^*$ which is independent of the initial state $z(0)$, and $Q(z, a)$ is an action-value function, which, for actions $a=0$ and $a=1$, is given by  
\begin{subequations}\label{eq_bellman_q_pomdp}
\begin{align}
    & Q(z , 0)  = r \min\{\Delta+1,\Delta^{\mathrm{max}}\} + \textstyle\sum_{r^\prime = 0}^{1} [r^\prime p + \notag \\
    &(1-r^\prime)(1-p)] h(\boldsymbol{\Lambda} \boldsymbol{\beta},r^\prime,\min\{\Delta+1,\Delta^{\mathrm{max}}\}, \tilde{b}), \\
    & Q(z ,1)  = [r \beta_{0}  \min\{\Delta+1,\Delta^{\mathrm{max}}\} + r(1-\beta_{0})] + \beta_0 \textstyle\sum_{r^\prime = 0}^{1}\notag \\
    & [r^\prime p + (1-r^\prime)(1-p)] h(\boldsymbol{\rho}^{0},r^\prime,\min\{\Delta+1,\Delta^{\mathrm{max}}\},\tilde{b})  + \notag \\
    & \textstyle\sum_{j = 1}^{B} \beta_j \big[p h(\boldsymbol{\rho}^{j},1,1,j) + (1-p) h(\boldsymbol{\rho}^{j},0,1,j) \big].
\end{align}
\end{subequations}
Then, an optimal action taken in belief-state $z$ is obtained as 
\begin{equation}\label{eq_optimal_policy}
 \pi^*(z) = \textstyle\argmin_{a\in\mathcal{A}} Q(z,a),~ z\in \mathcal{Z}.  
\end{equation}
\end{theorem}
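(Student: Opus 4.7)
The plan is to convert the POMDP into an equivalent fully observable belief-state MDP and then invoke the standard average-cost Bellman optimality theorem. The reformulation is justified by Proposition~\ref{lemma_beleief_update}: since $z(t)=(\boldsymbol{\beta}(t),s^{\mathrm{v}}(t))$ is a sufficient statistic for the edge node's history, the process $\{z(t)\}$ is Markovian under any stationary policy. Hence the original POMDP is equivalent to a fully observable MDP on the belief-state space $\mathcal{Z}$ with action set $\mathcal{A}$, transitions induced jointly by \eqref{eq_stp} and the update $\tau(\cdot)$ of Proposition~\ref{lemma_beleief_update}, and one-step cost given by the belief-weighted expectation $\bar c(z,a)=\sum_{b=0}^{B}\beta_b\, c((b,s^{\mathrm{v}}),a)$.

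Given this MDP, classical average-cost theory states that if there exist a scalar $\bar C^{*}$ and a function $h:\mathcal{Z}\to\mathbb{R}$ satisfying $\bar C^{*}+h(z)=\min_{a\in\mathcal{A}}\{\bar c(z,a)+\mathbb{E}[h(z')\mid z,a]\}$, then $\bar C^{*}$ is the optimal average cost and any selector $\pi^{*}(z)\in\argmin_{a}\{\bar c(z,a)+\mathbb{E}[h(z')\mid z,a]\}$ is optimal. The bulk of the work is then to verify that $\bar c(z,a)+\mathbb{E}[h(z')\mid z,a]$ collapses exactly to $Q(z,a)$ in \eqref{eq_bellman_q_pomdp}. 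Using $c(s,a)=r[(1-a\mathds{1}_{\{b\geq 1\}})\Delta+1]$, with $\Delta+1$ understood to be capped at $\Delta^{\mathrm{max}}$, direct calculation gives $\bar c(z,0)=r\min\{\Delta+1,\Delta^{\mathrm{max}}\}$ and $\bar c(z,1)=\beta_{0}\,r\min\{\Delta+1,\Delta^{\mathrm{max}}\}+r(1-\beta_{0})$, matching the non-future-value brackets in \eqref{eq_bellman_q_pomdp}.

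The future-value terms follow by enumerating the branches of $\tau(\cdot)$. For $a=0$ the belief deterministically advances to $\boldsymbol{\Lambda}\boldsymbol{\beta}$, the AoI to $\min\{\Delta+1,\Delta^{\mathrm{max}}\}$, and $\tilde b$ is preserved, while $r'\sim\mathrm{Bern}(p)$; averaging $h$ over $r'$ yields the sum appearing in $Q(z,0)$. For $a=1$ I would condition on the unobserved true battery~$b$: with probability $\beta_{0}$ the sensor cannot transmit, so the post-update belief becomes $\boldsymbol{\rho}^{0}$ and the visible part advances as in the $a=0$ case; with probability $\beta_{j}$ for $j\geq 1$ a status update is delivered, so the belief becomes $\boldsymbol{\rho}^{j}$, AoI resets to $1$, and $\tilde b$ becomes $j$. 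Summing $h$ against $\mathrm{Bern}(p)$ on each branch reproduces the remaining two sums in $Q(z,1)$, completing the matching.

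The hard part is verifying existence of the pair $(\bar C^{*},h)$, since $\mathcal{Z}$ is a priori uncountable. The key observation, illustrated in Table~\ref{tab:belief_set}, is that the reachable belief set from any initial $\boldsymbol{\beta}(1)$ has the form $\hat{\mathcal{B}}=\{\boldsymbol{\Lambda}^{k}\boldsymbol{\beta}(1):k\geq 0\}\cup\{\boldsymbol{\Lambda}^{k}\boldsymbol{\rho}^{j}:k\geq 0,\,j\in\{0,\dots,B\}\}$, which is countable; together with the finite observable component, the effective belief-state space is countable and the one-step cost is uniformly bounded by $\Delta^{\mathrm{max}}$. The harvest-and-command cycle further induces a recurrent structure (any commanded transmission from a non-empty battery regenerates a known pure belief $\boldsymbol{\rho}^{j}$), so weak accessibility holds; standard results (e.g., the vanishing-discount argument) then deliver the average-cost Bellman equation \eqref{eq_ballman_pomdp}, from which \eqref{eq_optimal_policy} produces the optimal policy.
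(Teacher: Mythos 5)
Your proposal is correct and follows essentially the same route as the paper: write the average-cost Bellman equation on the belief-state MDP (the paper simply cites \cite[Chap.~7]{sigaud2013markov} and Krishnamurthy for this step) and then verify that the expected one-step cost and the future-value terms, obtained by enumerating the branches of $\tau(\cdot)$ and splitting on $b=0$ versus $b\geq 1$ under $a=1$, collapse exactly to \eqref{eq_bellman_q_pomdp}. Your additional sketch of why a solution pair $(\bar{C}^*,h)$ exists (countable reachable belief set, bounded cost, regeneration at the pure beliefs $\boldsymbol{\rho}^{j}$) goes slightly beyond what the paper writes down, which disposes of that issue by citation alone.
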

\begin{proof}
{The proof is presented in Appendix~\ref{sec_appendix_theorem_bellman_eq}.}
\end{proof}
An optimal policy $\pi^*$ can be found by turning the Bellman's optimality equation \eqref{eq_bellman_q_pomdp} into an iterative procedure, called relative value iteration algorithm (RVIA) \cite[Section~8.5.5]{puterman2014markov}. Particularly, at each iteration $i = 0,1,\ldots$, we have
\begin{equation}\label{eq_v_itr}
\begin{array}{ll}
&V^{(i+1)}(z) =\min_{a\in\mathcal{A}} Q^{(i+1)}(z,a), \\& h^{(i+1)}(z) = V^{(i+1)}(z) - V^{(i+1)}(z_{\mathrm{ref}}),
\end{array}
\end{equation}
where ${z_{\mathrm{ref}} \in \mathcal{S}}$ is an arbitrary reference state and $Q^{(i+1)}(z,a)$, $a \in \{0,1\}$, is given by
\begin{equation}\label{eq_q_itr}
\begin{array}{ll}
    &\hspace{-4mm} Q^{(i+1)}(z , 0)  = r \min\{\Delta+1,\Delta^{\mathrm{max}}\} + \textstyle\sum_{r^\prime = 0}^{1} [r^\prime p +  \\
    &(1-r^\prime)(1-p)] h^{(i)}(\boldsymbol{\Lambda} \boldsymbol{\beta},r^\prime,\min\{\Delta+1,\Delta^{\mathrm{max}}\}, \tilde{b}), \\
    & \hspace{-4mm} Q^{(i+1)}(z ,1) \!=\! [r \beta_{0}  \!\min\{\!\Delta+1,\!\Delta^{\mathrm{max}}\}\!+\! r(1\!-\!\beta_{0})] + \beta_0\textstyle\sum_{r^\prime = 0}^{1}  \\
    & [r^\prime p + (1-r^\prime)(1-p)] h^{(i)}(\boldsymbol{\rho}^{0},r^\prime,\min\{\Delta+1,\Delta^{\mathrm{max}}\},\tilde{b})   \\
    & + \textstyle\sum_{j = 1}^{B} \beta_j \big[p h^{(i)}(\boldsymbol{\rho}^{j},1,1,j) + (1-p) h^{(i)}(\boldsymbol{\rho}^{j},0,1,j) \big].
\end{array}
\end{equation}
For any initialization $V^{(0)}(z)$, the sequences $\{Q^{(i)}(z,a)\}_{i=1,2,\ldots}$, $\{h^{(i)}(z)\}_{i=1,2,\ldots}$ and $\{V^{(i)}(z)\}_{i=1,2,\ldots}$ converge, i.e., $\lim_{i\to\infty}Q^{(i)}(z,a) = Q(z,a)$, $\lim_{i\to\infty}h^{(i)}(z) = h(z)$ and ${\lim_{i\to\infty}V^{(i)}(z) = V(z)}$, $\forall z$.
Thus, $h(z) = V(z) - V(z_{\mathrm{ref}})$ satisfies \eqref{eq_ballman_pomdp} 
and $\bar{C}^* = V(z_{\mathrm{ref}})$. 


\begin{theorem}\label{theorem_structure_v_part1}
$V(z)$ is fixed with respect to $\tilde{b}$.
\end{theorem}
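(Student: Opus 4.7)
The plan is to use induction on the iterations of the relative value iteration (RVIA) defined in \eqref{eq_v_itr}--\eqref{eq_q_itr}. Initializing $V^{(0)}(z) \equiv 0$, which is trivially independent of $\tilde{b}$, I would show that if $V^{(i)}$ (equivalently $h^{(i)}$) does not depend on the $\tilde{b}$-component of its argument, then $V^{(i+1)}$ does not either; the conclusion then follows by passing to the limit via the convergence $V^{(i)} \to V$ asserted after \eqref{eq_q_itr}.

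The key structural observation that drives the induction is that $\tilde{b}$ enters the POMDP only as a passively carried quantity. First, the immediate cost $c(s,a) = r[(1-a\mathds{1}_{\{b\geq 1\}})\Delta+1]$ does not involve $\tilde{b}$. Second, the transition rules listed after \eqref{eq_stp} either leave $\tilde{b}'=\tilde{b}$ (whenever no new update reaches the edge node) or overwrite $\tilde{b}'=b$ with the freshly reported battery level, in which case $\tilde{b}'$ is independent of the old $\tilde{b}$. Third, the belief update in \eqref{eq_belief_update} selects among its cases based only on the action $a(t)$ and on the newly reported battery level encoded in $o(t+1)$, not on the previous value of $\tilde{b}$.

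To carry out the induction step, I would substitute a generic $z=(\boldsymbol{\beta}, r, \Delta, \tilde{b})$ into the two expressions for $Q^{(i+1)}$ in \eqref{eq_q_itr}. For $a=0$ the only $\tilde{b}$-dependence appears inside $h^{(i)}(\boldsymbol{\Lambda}\boldsymbol{\beta}, r', \min\{\Delta+1,\Delta^{\mathrm{max}}\}, \tilde{b})$, which is $\tilde{b}$-independent by the induction hypothesis. For $a=1$, the terms associated with $\beta_{0}$ have future argument $(\boldsymbol{\rho}^{0}, r', \min\{\Delta+1,\Delta^{\mathrm{max}}\}, \tilde{b})$, again removed by the hypothesis, while each of the remaining terms $h^{(i)}(\boldsymbol{\rho}^{j},\cdot,1,j)$ for $j=1,\ldots,B$ never mentions the old $\tilde{b}$ at all. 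Hence both $Q^{(i+1)}(z,0)$ and $Q^{(i+1)}(z,1)$ are independent of $\tilde{b}$, and so is their minimum $V^{(i+1)}(z)$.

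The main obstacle I anticipate is purely careful bookkeeping: one has to confirm that the case split in \eqref{eq_belief_update} is truly governed by the combination of $a(t)$ and the freshly reported battery level (rather than by the old $\tilde{b}$), and that no hidden $\tilde{b}$-dependence sneaks in through the conditional observation probabilities used implicitly in Bayes' rule. Once this is verified, the induction step becomes mechanical, and taking $i \to \infty$ transfers the $\tilde{b}$-independence from each $V^{(i)}$ to the limit $V(z)$, establishing the claim.
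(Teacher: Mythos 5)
Your proof is correct and follows essentially the same route as the paper: an induction over the RVIA iterates showing that the $Q^{(i+1)}$ expressions in \eqref{eq_q_itr} touch $\tilde{b}$ only through the last argument of $h^{(i)}$ (or not at all, for the $\boldsymbol{\rho}^{j}$ terms with $j\ge 1$), followed by passing to the limit. The paper merely packages the same induction as two one-sided inequalities $V^{(i)}(z)\le V^{(i)}(\underline{z})$ and $V^{(i)}(\underline{z})\le V^{(i)}(z)$ for states differing only in $\tilde{b}$, which your direct equality induction subsumes.
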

\begin{proof}
{The proof is presented in Appendix~\ref{sec_appendix_theorem_structure_v_part1}.}
\end{proof}


\begin{corollary}
\normalfont {According to} Theorem~\ref{theorem_structure_v_part1}, $V(z)$, $z = \{\boldsymbol{\beta},r, \Delta, \tilde{b}\}$, {and consequently $h(z)$ and $Q(z,a)$}, do not depend on $\tilde{b}$. Thus, $\tilde{b}$ does not have any impact on calculating $\pi^*$ in \eqref{eq_optimal_policy}. Therefore, we remove\footnote{Note that while $\tilde{b}$ is removed from the belief-state, it is still needed to calculate the belief $\boldsymbol{\beta}(t)$.} $\tilde{b}$ from the belief-state $z$ and redefine the belief-state hereinafter as $z = \{\boldsymbol{\beta},r, \Delta\} \in \mathcal{Z}$. This will be exploited to reduce the computational complexity of the proposed algorithm. Note that \eqref{eq_bellman_q_pomdp} can easily be rewritten based on the new belief-state definition.
\end{corollary}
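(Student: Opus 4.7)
My plan is to prove the claim by induction on the RVIA iterates defined in \eqref{eq_v_itr}--\eqref{eq_q_itr}, showing that for every iteration $i$, the value function $V^{(i)}(z)$ is constant in the $\tilde{b}$ coordinate of $z=\{\boldsymbol{\beta},r,\Delta,\tilde{b}\}$. Since the RVIA sequence converges to $V(z)$ (as stated after \eqref{eq_q_itr}), the same independence will then pass to the limit, which is exactly the statement of the theorem.

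The two structural facts I would exploit are: (i) the immediate cost $c(s,a)=r[(1-a\mathds{1}_{\{b\geq1\}})\Delta+1]$ has no $\tilde{b}$ in it; and (ii) by Proposition~\ref{lemma_beleief_update}, the next belief $\boldsymbol{\beta}(t+1)$ depends only on $\boldsymbol{\beta}(t)$, $a(t)$, and whether an update was received (and, if so, on the reported battery level $j$), but never on the previous $\tilde{b}(t)$. Similarly, the next observable $\tilde{b}(t+1)$ either equals the old $\tilde{b}(t)$ (no update received) or equals the freshly reported $j\in\{1,\dots,B\}$ (update received); in the latter case it is fully determined by the summation index of $\beta_j$ in \eqref{eq_bellman_q_pomdp}. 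So in no branch does the value at the successor belief-state get indexed by the current $\tilde{b}$ in a way that couples back through $h^{(i)}$.

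Concretely, I would initialize $V^{(0)}(z)\equiv0$, which trivially is independent of $\tilde{b}$, so $h^{(0)}(z)$ is too. For the inductive step, assume $h^{(i)}(\boldsymbol{\beta},r,\Delta,\tilde{b})$ does not depend on its last argument. Examine the two branches in \eqref{eq_q_itr}: in $Q^{(i+1)}(z,0)$, the successor state passed into $h^{(i)}$ has last argument $\tilde{b}$, but by the inductive hypothesis this coordinate does not matter; in $Q^{(i+1)}(z,1)$, the first $h^{(i)}$ term again carries the old $\tilde{b}$ (irrelevant by the IH), while each summand of the $\sum_{j=1}^{B}\beta_j[\cdot]$ carries last argument $j$, which is an integration variable independent of the current $\tilde{b}$. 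Hence $Q^{(i+1)}(z,0)$ and $Q^{(i+1)}(z,1)$ are both independent of $\tilde{b}$, so $V^{(i+1)}(z)=\min_a Q^{(i+1)}(z,a)$ is as well. Choosing $z_\mathrm{ref}$ with any fixed $\tilde{b}$ then yields $h^{(i+1)}$ independent of $\tilde{b}$, closing the induction. Taking $i\to\infty$ in the RVIA gives the claim.

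I do not anticipate a serious obstacle: the only subtle branch is the $a=1$, $b\geq 1$ case, where the successor observable $\tilde{b}'$ happens to coincide with the reported battery $j$; the point to verify carefully is that, in this branch, the appearance of $j$ in the last argument of $h^{(i)}$ is precisely the summation index and not a function of the input $\tilde{b}$, which is immediate from the transition structure in Section~\ref{sec_systemmodel} and from \eqref{eq_belief_update}. Everything else is a routine propagation of the inductive hypothesis through \eqref{eq_q_itr}.
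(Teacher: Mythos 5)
Your proposal is correct and follows essentially the same route as the paper: the corollary itself is an immediate consequence of Theorem~\ref{theorem_structure_v_part1}, whose proof in Appendix~\ref{sec_appendix_theorem_structure_v_part1} is likewise an induction on the RVIA iterates exploiting exactly the structural facts you identify (the cost is $\tilde{b}$-free, and $\tilde{b}$ enters \eqref{eq_q_itr} only through the last argument of $h^{(i)}$ at successor states, or is overwritten by the summation index $j$). The only cosmetic difference is that the paper establishes the two inequalities $V^{(i)}(z)\leq V^{(i)}(\underline{z})$ and $V^{(i)}(\underline{z})\leq V^{(i)}(z)$ separately via the suboptimal-action comparison, whereas you show directly that both $Q^{(i+1)}(z,0)$ and $Q^{(i+1)}(z,1)$ are constant in $\tilde{b}$ under the inductive hypothesis, which is a slightly more streamlined presentation of the same argument.
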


\begin{algorithm}[t]
\begin{small}
\caption{Proposed algorithm that obtains $\pi^*$.
}\label{value_itr_pomdp}
\begin{algorithmic}[1]
\STATE \textbf{Initialize} $V(z) = h(z) =0,~\forall z = \{\boldsymbol{\beta},r,\Delta\}, \boldsymbol{\beta}\in \hat{\mathcal{B}}, r \in  \{0,1\}, \Delta \in \{1,\dots,\Delta^{\mathrm{max}}\}$, determine an arbitrary $z_{\textrm{ref}} \in \mathcal{Z}$, a small threshold $\theta > 0$, and large $M$ such that $\boldsymbol{\Lambda}^M \boldsymbol{\beta} \approx \boldsymbol{\Lambda}^{(M+1)} \boldsymbol{\beta}$
\REPEAT
\FOR{$z$}
\STATE calculate $Q(z,0)$ and $Q(z,1)$ by using \eqref{eq_q_itr}
\STATE $V_{\mathrm{tmp}}(z) \leftarrow \min_{a \in \mathcal{A}} Q(z,a)$
\ENDFOR
\STATE \hspace{-2.5mm}$\delta \leftarrow \max_{z} (V_{\textrm{tmp}}(z) - V(z)) - \min_{z} (V_{\textrm{tmp}}(z) - V(z))$
\STATE \hspace{-2.5mm}$V(z) \leftarrow V_{\textrm{tmp}}(z)$ and $h(z) \leftarrow V(z) - V(z_{\mathrm{ref}})$,  for all $z$
\UNTIL{$\delta < \theta$}
\STATE $\pi^*(z) = \argmin_{a \in \mathcal{A}} Q(z,a)$, for all $z$
\end{algorithmic}
\end{small}
\end{algorithm}

Although the sequences in \eqref{eq_v_itr} converges, finding such $V(z)$ (and $h(z)$) iteratively via \eqref{eq_v_itr} is intractable, because the belief space $\mathcal{B}$ has infinite dimension. Fortunately, there is a certain pattern in the evolution of the belief-states $\{z(t)\}$, which allows us to perform a truncation of belief space $\mathcal{B}$ and subsequently develop a practical iterative algorithm relying on \eqref{eq_v_itr}.
Particularly, given the initial belief $\boldsymbol{\beta}(1)$, the belief at slot ${t = 2}$
{is chosen from the set $\{\boldsymbol{\Lambda} \boldsymbol{\beta}(1),\{\boldsymbol{\rho}^{j}\}_{j=1}^{B}\}$, the belief at ${t = 3}$ is a member of the set $\{\boldsymbol{\Lambda}^2 \boldsymbol{\beta}(1),\{\boldsymbol{\Lambda}\boldsymbol{\rho}^{j}\}_{j=1}^{B},\{\boldsymbol{\rho}^{j}\}_{j=1}^{B}\}$, the belief at ${t = 4}$ is selected from the set $\{\boldsymbol{\Lambda}^3 \boldsymbol{\beta}(1),\{\boldsymbol{\Lambda}^{2}\boldsymbol{\rho}^{j}\}_{j=1}^{B},\{\boldsymbol{\Lambda}\boldsymbol{\rho}^{j}\}_{j=1}^{B},\{\boldsymbol{\rho}^{j}\}_{j=1}^{B}\}$, and etc.} Thus, the belief space $\mathcal{B}$, which contains all the possible beliefs $\boldsymbol{\beta}(t),~\forall t$, is countable but infinite,
as shown in Table \ref{tab:belief_set}.

The following lemma expresses an important property of the matrix $\boldsymbol{\Lambda}$ {in \eqref{eq_matrix_lambda}}, which is used to truncate the belief space $\mathcal{B}$ into a finite space $\hat{\mathcal{B}}$. 
\begin{lemm}\label{lemm_struct_LAMBDA}
The $m$th power of matrix $\boldsymbol{\Lambda}$ is given by
\begin{align}
 \Lambda_{j,l}^m  =  \left \lbrace
 \begin{array}{ll}
    0,   & j < l, \\
    (1-\lambda)^m,  & j = l,\\
    \lambda^{(j-l)}(1-\lambda)^{(m-j+l)}\prod_{v = 0}^{j-l-1}\frac{(m-v)}{(v+1)},  & l < j \leq B,\\
    1 - \sum_{j^\prime = 1}^{B}\Lambda_{j^\prime,l}^m,  & j = B + 1,
 \end{array}
 \right.
 \notag
\end{align}
where $\Lambda_{j,l}^m$ is the {entry} of matrix $\boldsymbol{\Lambda}^m$ at its $j$th row and $l$th column.
Thus, for any $\lambda > 0$, we have
\begin{equation}
    \lim_{m \rightarrow \infty} \Lambda_{j,l}^m = \left \lbrace
    \begin{array}{ll}
    0,   & j \leq B, \\
    1,  & j = B + 1,
    \end{array}
    \right.
\end{equation}
\end{lemm}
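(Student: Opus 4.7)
The plan is to interpret $\boldsymbol{\Lambda}$ probabilistically. Since it is column-stochastic, view $\boldsymbol{\Lambda}$ as the transition kernel of a Markov chain whose state labels $l \in \{1,\ldots,B+1\}$ correspond to battery levels $l-1 \in \{0,\ldots,B\}$. From any state $l \leq B$, the chain moves up to state $l+1$ with probability $\lambda$ and stays at $l$ with probability $1-\lambda$; state $B+1$ is absorbing (self-loop with probability $1$). Under this interpretation, $\Lambda_{j,l}^m$ is precisely the probability that the chain starting at state $l$ occupies state $j$ after $m$ transitions.

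First, since the chain is non-decreasing, $\Lambda_{j,l}^m = 0$ whenever $j < l$. For $j = l \leq B$, remaining at $l$ for all $m$ steps requires no up-move, giving probability $(1-\lambda)^m$. For $l < j \leq B$, ending at the non-absorbing state $j$ forces \emph{exactly} $j-l$ up-moves and $m-(j-l)$ stays among the $m$ steps, because any additional up-move would push the chain past state $j$ and monotonicity prevents returning. The number of orderings of the up-moves is $\binom{m}{j-l}$, so the probability is $\binom{m}{j-l}\lambda^{j-l}(1-\lambda)^{m-(j-l)}$; the product form in the lemma is just the identity $\binom{m}{j-l} = \prod_{v=0}^{j-l-1}\frac{m-v}{v+1}$, which also correctly evaluates to $0$ when $m<j-l$. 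For $j = B+1$, rather than summing a truncated binomial tail, I would appeal to column-stochasticity: $\boldsymbol{\Lambda}^m$ inherits this property from $\boldsymbol{\Lambda}$, so each column of $\boldsymbol{\Lambda}^m$ sums to $1$, yielding $\Lambda_{B+1,l}^m = 1 - \sum_{j'=1}^{B} \Lambda_{j',l}^m$.

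For the asymptotic statement, the cases $j < l$ and $j = l$ are immediate: $0$ and $(1-\lambda)^m \to 0$ for $\lambda > 0$. For $l < j \leq B$, the coefficient $\binom{m}{j-l}$ grows only polynomially in $m$ (of degree $j-l$), whereas $(1-\lambda)^{m-(j-l)}$ decays geometrically, so the product tends to $0$. Combining these with the column-sum identity then forces $\Lambda_{B+1,l}^m \to 1$. Probabilistically this simply reflects that, for any $\lambda > 0$, the chain is almost surely absorbed at state $B+1$ in finite time.

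The main obstacle is not technical but expository: one must be careful to justify why ``\emph{exactly} $j-l$ up-moves'' (rather than ``at most'') is the correct count on the non-absorbing range $l < j \leq B$, which comes from the monotonicity of the trajectory and is what permits the unrestricted binomial counting; the absorbing row $j = B+1$ is then handled cleanly by complementarity, avoiding an awkward tail-sum computation.
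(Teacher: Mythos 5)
Your proof is correct, but it takes a genuinely different route from the paper. The paper proves the entry formula by induction on $m$: it verifies the base case $m=1$, writes out the assumed form of $\boldsymbol{\Lambda}^m$, and checks entrywise that $\boldsymbol{\Lambda}^m\boldsymbol{\Lambda}$ reproduces the same pattern with $m$ replaced by $m+1$ (implicitly using Pascal-type identities on the products $\prod_{v=0}^{j-l-1}\tfrac{m-v}{v+1}$); the limit claim is left to follow from the explicit formula. You instead read $\boldsymbol{\Lambda}$ as the one-step kernel of the battery-level chain under no command (up one level with probability $\lambda$, stay with probability $1-\lambda$, top level absorbing), so that $\Lambda^m_{j,l}$ is an $m$-step transition probability; monotonicity gives the zero entries, the ``exactly $j-l$ up-moves'' count gives the binomial form $\binom{m}{j-l}\lambda^{j-l}(1-\lambda)^{m-(j-l)}$ for the non-absorbing rows, and column-stochasticity of $\boldsymbol{\Lambda}^m$ gives the last row by complementarity. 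Your key justification --- that a path ending at $j\leq B$ never enters the absorbing state, so all $m$ steps are unconstrained Bernoulli trials --- is exactly the point that makes the unrestricted binomial count legitimate, and you state it. What your approach buys is a computation-free argument and a transparent proof of the limit (polynomial-times-geometric decay plus almost-sure absorption), which the paper's appendix does not spell out; what the paper's induction buys is a purely mechanical verification that requires no probabilistic reinterpretation of $\boldsymbol{\Lambda}$. One cosmetic remark: the lemma's cases overlap at $j=l=B+1$ (where the correct value is $1$, not $(1-\lambda)^m$); you implicitly resolve this correctly by restricting the $(1-\lambda)^m$ case to $j=l\leq B$, which is the intended reading.
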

\begin{proof}
{The proof is presented in Appendix~\ref{sec_appendix_lemm_struct_LAMBDA}.}
\end{proof}
\begin{corollary}
\normalfont By Lemma~\ref{lemm_struct_LAMBDA}, for a sufficiently large integer $M$, we have $\boldsymbol{\Lambda}^{M} \approx \boldsymbol{\Lambda}^{M+1}$. Thus, we construct a truncated belief space $\hat{\mathcal{B}}$ of finite dimension $|\hat{\mathcal{B}}| = (B+1)M$, as shown in Table \ref{tab:belief_set}. Intuitively, the value $M$ represents {the maximum number of consecutive no-command actions ($a = 0$) for which the belief is updated; from the $(M+1)$th no-command onward, the belief is no longer updated}.
{This is reasonable because after $M$ consecutive $a = 0$ actions, the battery is almost full,
i.e., $\boldsymbol{\Lambda}^M \boldsymbol{\beta} \rightarrow (0,0,\dots,0,1)\tran$,~$\forall \boldsymbol{\beta} \in \mathcal{B}$, and thus, for sufficiently large $M$, the space $\hat{\mathcal{B}}$ contains (almost) all the possible beliefs.}
\end{corollary}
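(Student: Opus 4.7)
The corollary rests on three points: the entrywise stabilization of $\boldsymbol{\Lambda}^m$, the convergence of $\boldsymbol{\Lambda}^M \boldsymbol{\beta}$ to the full-battery belief, and the enumeration of reachable beliefs.

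First, I would formalize $\boldsymbol{\Lambda}^M \approx \boldsymbol{\Lambda}^{M+1}$. By the limit statement in Lemma~\ref{lemm_struct_LAMBDA}, each entry of $\boldsymbol{\Lambda}^m$ converges as $m \to \infty$: the entries in the last row tend to $1$ while all other entries tend to $0$. Hence the matrix sequence $\{\boldsymbol{\Lambda}^m\}$ converges entrywise to a limiting matrix $\boldsymbol{\Lambda}^\infty$ that is zero outside of its last row. For any tolerance $\theta > 0$, one may therefore pick a finite $M$ so that $\|\boldsymbol{\Lambda}^M - \boldsymbol{\Lambda}^{M+1}\|_\infty < \theta$, which is the quantitative meaning of the stated approximation.

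Next, I would transfer this convergence to beliefs. For any probability vector $\boldsymbol{\beta} \in \mathcal{B}$ with $\sum_l \beta_l = 1$, entrywise matrix convergence gives $(\boldsymbol{\Lambda}^m \boldsymbol{\beta})_j \to 0$ for every row $j$ except the last, while the last component tends to $\sum_l \beta_l = 1$. Thus $\boldsymbol{\Lambda}^M \boldsymbol{\beta} \to (0, \ldots, 0, 1)\tran$, formalizing the intuition that after many uninterrupted no-command slots the battery saturates at its maximum level with near certainty and the belief collapses onto the full-battery state.

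Finally, I would justify the size of $\hat{\mathcal{B}}$ via enumeration. Proposition~\ref{lemma_beleief_update} shows that every reachable belief has the form $\boldsymbol{\Lambda}^m \boldsymbol{\beta}(1)$ (if no command has ever been issued) or $\boldsymbol{\Lambda}^m \boldsymbol{\rho}^j$ for some $j \in \{1, \ldots, B\}$ (with $\boldsymbol{\rho}^0 = \boldsymbol{\rho}^1$ identified), where $m$ counts the consecutive no-command slots elapsed since the last command. Table~\ref{tab:belief_set} arranges these along $B+1$ rows indexed by the integer $m \geq 0$, so the set of all reachable beliefs is countably infinite. By the first step, every column beyond index $M$ coincides with column $M$ up to tolerance $\theta$, so truncating the column index at $M$ captures every reachable belief to within $\theta$, giving a finite set $\hat{\mathcal{B}}$ whose cardinality scales like $(B+1)M$.

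The main obstacle is turning ``$\approx$'' and ``almost all the possible beliefs'' into quantitative statements rather than merely asymptotic ones. The closed form of $\Lambda_{j,l}^m$ in Lemma~\ref{lemm_struct_LAMBDA} reveals a geometric decay dominated by $(1-\lambda)^{m}$ times a polynomial factor in $m$, so for any target tolerance $\theta > 0$ one may take $M$ on the order of $\log(1/\theta)/\log\bigl(1/(1-\lambda)\bigr)$, provided $\lambda > 0$. This explicit rate underpins the choice of $M$ in Algorithm~\ref{value_itr_pomdp} and guarantees that the finite belief set $\hat{\mathcal{B}}$ suffices for the relative value iteration to any desired precision.
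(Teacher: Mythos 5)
Your argument is correct and follows essentially the same route as the paper, which states this corollary as an immediate consequence of Lemma~\ref{lemm_struct_LAMBDA} (entrywise convergence of $\boldsymbol{\Lambda}^m$ to a matrix supported on the last row, hence collapse of $\boldsymbol{\Lambda}^m\boldsymbol{\beta}$ onto the full-battery belief, hence truncation of the columns of Table~\ref{tab:belief_set} at index $M$). Your added quantitative rate $M = O\bigl(\log(1/\theta)/\log(1/(1-\lambda))\bigr)$ is a useful refinement the paper does not spell out, though strictly the polynomial prefactor of degree up to $B$ in the closed form of $\Lambda_{j,l}^m$ means $M$ must be taken slightly larger than that leading-order estimate.
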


Finally, considering the truncated belief space $\hat{\mathcal{B}}$, we use \eqref{eq_v_itr}--\eqref{eq_q_itr} to find $V(z)$, $h(z)$, $Q(z,a)$, and consequently an optimal policy $\pi^*$ iteratively, as presented in Algorithm~\ref{value_itr_pomdp}. 

\section{Simulation Results}\label{sec_simulation}
We consider a scenario with $\lambda = 0.08$, $p = 0.8$, $\Delta^{\mathrm{max}} = 64$, and $B = 2$.
{Fig.~\ref{fig_structure} illustrates the structure of an optimal policy $\pi^*$, where each point represents a potential  belief-state as a three-tuple $z = (\boldsymbol{\beta},1,{\Delta})$. For each such $z$, a blue point indicates that the optimal action is to command the sensor (i.e., $\pi^*(z) = 1$), whereas a red point means not to command.}
Fig.~\ref{fig_structure} illustrates that $\pi^*$ has a \textit{threshold-based} structure with respect to the AoI.
To exemplify, consider the belief-state $z =\{ (1,7), 1, 20\}$
in which $\pi^*(z) = 1$; then, by the threshold-based structure,  $\pi^*(\underline{z}) = 1$ for all $\underline{z} = \{ (1,10), 1, \Delta\}$, $\Delta \geq 20$.

Fig.~\ref{fig_performance} depicts the performance of the proposed algorithm. In the \textit{ request-aware greedy (myopic)} policy, the edge node commands the sensor whenever there is a request (i.e., $r(t) = 1$).
{As benchmark, we 
consider a case that the edge node knows the exact battery level at each time slot. In this case, an optimal policy, denoted by $\pi_{\mathrm{MDP}}(s), \forall s$, can be found by the value iteration algorithm \cite{hatami2021spawc}. Clearly, this policy
serves as a lower bound to the proposed POMDP-based algorithm.
As shown in Fig.~\ref{fig_performance}, for sufficiently large $M$ (say, $M \geq 32$), the proposed algorithm obtains optimal performance.}
Furthermore, the proposed method reduces the average cost by
{$28~\%$} compared to the greedy policy.



\newcommand\fw{0.98}
\begin{figure}
    \centering
    \includegraphics[width=\fw \columnwidth]{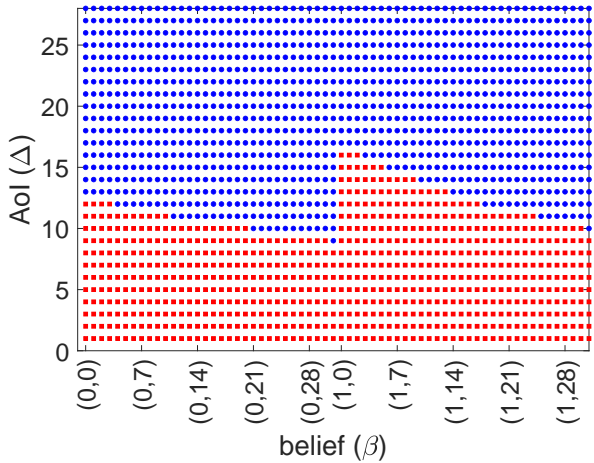}
    \caption{Structure of an optimal policy $\pi^*(z)$ for each belief-state $z = \{\boldsymbol{\beta},1, {\Delta}\}$, where $p = 0.8$, $\lambda = 0.08$, and initial belief $\boldsymbol{\beta}(1) = (1/3,1/3,1/3)$. \mbox{Red: no command $a=0$; blue: command $a=1$.}}
    \label{fig_structure}
\end{figure}


\newcommand\fww{0.93}
\begin{figure}
    \centering
    \includegraphics[width=\fww \columnwidth]{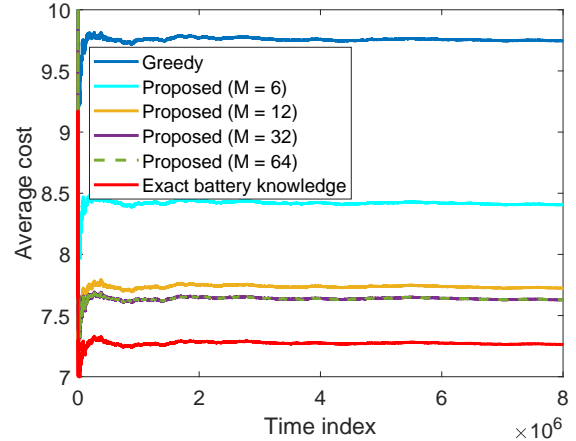}
    \vspace{-1mm}
    \caption{Performance of the proposed algorithm.}
    \label{fig_performance}
    \vspace{-5mm}
\end{figure}



\section{Conclusions}\label{sec_conclusions}
{We considered  status updating under inexact knowledge of the battery level of an EH sensor that sends updates to users via a cache-enabled edge node. Accounting for the partial battery knowledge at the edge node, we derived a POMDP model for the on-demand AoI minimization problem. We converted the POMDP into a belief-state MDP and, via characterizing its key structures, developed an iterative algorithm that obtains an optimal policy. We numerically depicted that an optimal POMDP-based policy has a threshold-based structure and demonstrated the performance gains obtained by the proposed algorithm compared to a request-aware greedy policy.}

\section{Acknowledgments}
This research has been financially supported by the Infotech Oulu, the Academy of Finland (grant 323698), and Academy of Finland 6Genesis Flagship (grant 318927). M. Hatami would like to acknowledge the support of Nokia Foundation. The work of M. Leinonen has also been financially supported by the Academy of Finland (grant 340171).



\bibliographystyle{IEEEtran}
\bibliography{Bib/conf_short,Bib/IEEEabrv,Bib/Bibliography}

\newpage
.
\newpage

\begin{onecolumn}
\begin{appendix}

\subsection{Proof of Proposition \ref{lemma_beleief_update}}\label{sec_appendix_lemma_beleief_update}
We start from the definition of the belief in \eqref{eq_def_update} and express ${\beta_j(t+1)}$ as
\begin{equation}\label{eq-proof-belief-update-def}
    \begin{aligned}
     \beta_j(t+1) & = \Pr\left(b(t+1) = j \mid \phi^\mathrm{c}(t+1)\right) \\
     & = \Pr\left(b(t+1) = j \mid \phi^\mathrm{c}(t), o(t+1), a(t)\right)\\ 
    & = \frac{\Pr(b(t+1) = j, \phi^\mathrm{c}(t), o(t+1), a(t))}{\Pr(\phi^\mathrm{c}(t), o(t+1), a(t))} \\
    & = \frac{\Pr(\phi^\mathrm{c}(t),a(t)) \Pr(b(t+1) = j,o(t+1)\mid \phi^\mathrm{c}(t),a(t))}{\Pr(\phi^\mathrm{c}(t),a(t)) \Pr(o(t+1)\mid \phi^\mathrm{c}(t),a(t))}\\
    & = \frac{\Pr(b(t+1) = j,o(t+1)\mid \phi^\mathrm{c}(t),a(t))}{\Pr(o(t+1)\mid \phi^\mathrm{c}(t),a(t))}\\
    & \overset{(a)}{=} \frac{\Pr(b(t+1) = j,o(t+1)\mid \phi^\mathrm{c}(t),a(t))}{\zeta}\\
    & {=} \frac{1}{\zeta}\sum_{i = 0}^{B}\Pr(b(t) = i,b(t+1) = j,o(t+1)\mid \phi^\mathrm{c}(t),a(t)) \\
    & = \frac{1}{\zeta}\sum_{i = 0}^{B}\Pr(b(t) = i \mid \phi^\mathrm{c}(t),a(t))\Pr(b(t+1) = j \mid b(t) = i, \phi^\mathrm{c}(t),a(t)) \times \\& \hspace{13mm} \Pr(o(t+1)\mid b(t+1) = j, b(t) = i, \phi^\mathrm{c}(t),a(t))\\
    & \overset{(b)}{=} \frac{1}{\zeta}\sum_{i = 0}^{B} \beta_i(t) \Pr(b(t+1) = j \mid b(t) = i,a(t)) \Pr(o(t+1)\mid b(t+1) = j, b(t) = i, \phi^\mathrm{c}(t),a(t))
    \end{aligned}
\end{equation}
where $(a)$ follows by introducing a normalization factor ${\zeta \triangleq \Pr(o(t+1)\mid \phi^\mathrm{c}(t),a(t))}$, which can be calculated using the fact that ${\sum_j \beta_j(t+1) = 1}$, and $(b)$ follows from i) ${\Pr(b(t) = i \mid \phi^\mathrm{c}(t),a(t))=\Pr(b(t) = i \mid \phi^\mathrm{c}(t))}$ because $b(t)$ is given when performing action $a(t)$, and subsequently using the belief definition $\beta_i(t)$ in \eqref{eq_def_update}, ii) $\Pr(b(t+1) = j \mid b(t) = i,\phi^\mathrm{c}(t),a(t)) = \Pr(b(t+1) = j \mid b(t) = i,a(t))$ because $b(t+1)$ is independent of $\phi^\mathrm{c}(t)$ given $b(t)$ and $a(t)$.

Next, we derive an expression for $\beta_j(t+1)$ in \eqref{eq-proof-belief-update-def} for the different cases regarding action ${a(t) \in \{0,1\}}$.

\subsubsection{Action $a(t) = 0$}
For the case where $a(t) = 0$, the edge node does not receive a status update packet and thus the next observation $o(t+1)$ is either $o(t+1) = \{1,\min\{\Delta(t)+1,\Delta^{\mathrm{max}}\},\tilde{b}(t)\}$ or $o(t+1) = \{0,\min\{\Delta(t)+1,\Delta^{\mathrm{max}}\},\tilde{b}(t)\}$, which happens with probability $p$ and $1-p$, respectively. Recall that $p$ is the probability of having a request at each time slot (i.e., $\mathrm{Pr}\{r(t) = 1\} = p$, $\forall t$). We next calculate the belief update function for the case where $a(t) = 0$ and $o(t+1) = \{1,\min\{\Delta(t)+1,\Delta^{\mathrm{max}}\},\tilde{b}(t)\}$. By \eqref{eq-proof-belief-update-def}, ${\beta_j(t+1)}$, $j \in \{0,1,\dots,B\}$, is expressed as
\begin{equation}\label{eq-proof-belief-update-def_a0}
    \begin{aligned}
    \beta_j(t+1) &=  \frac{1}{\zeta}\sum_{i = 0}^{B} \beta_i(t) \Pr(b(t+1) = j \mid b(t) = i,a(t) = 0) \underbrace{\Pr(o(t+1) \mid b(t+1) = j, b(t) = i, \phi^\mathrm{c}(t),a(t)=0)}_{\overset{(a)}{=} p} \\ 
    & = \frac{p}{\zeta}\sum_{i = 0}^{B} \beta_i(t) \underbrace{\Pr(b(t+1) = j \mid b(t) = i,a(t) = 0)}_{(\star)},
\end{aligned}
\end{equation}
where $(a)$ follows from 
\begin{equation}\label{eq-proof-belief-update-prob-o1-a0}
    \begin{aligned}
     & \Pr\Big(o(t+1) =  \{1,\min\{\Delta(t)+1,\Delta^{\mathrm{max}}\},\tilde{b}(t)\} \mid   b(t+1)= j, b(t) = i, \underbrace{\{\phi^\mathrm{c}(t-1),r(t),\Delta(t),\tilde{b}(t),a(t-1)\}}_{\phi^\mathrm{c}(t)},a(t)=0\Big)=\\
     & \Pr\Big(o(t+1) =  \{1,\min\{\Delta(t)+1,\Delta^{\mathrm{max}}\},\tilde{b}(t)\} \mid   \Delta(t), \tilde{b}(t),a(t)=0\Big)=\\
     &  \Pr\Big(r(t+1) = 1,\Delta(t+1) = \min\{\Delta(t)+1,\Delta^{\mathrm{max}}\}, \tilde{b}(t+1) = \tilde{b}(t) \mid \Delta(t), \tilde{b}(t),a(t)=0\Big) \overset{(b)}{=} \\
     & \underbrace{\Pr(r(t+1) = 1)}_{=p} \underbrace{\Pr(\Delta(t+1) = \min\{\Delta(t)+1,\Delta^{\mathrm{max}}\},\tilde{b}(t+1) = \tilde{b}(t)  \mid \Delta(t),\tilde{b}(t), a(t)=0)}_{=1} = p,
    \end{aligned}
\end{equation}
where $(b)$ follows from the independence of the request process from the other variables.
Recall that at each slot, the sensor harvests one unit of energy with probability $\lambda$. Thus, $(\star)$ in \eqref{eq-proof-belief-update-def_a0} is expressed as
\begin{equation} \label{eq-proof-belief-update-Pr-harvest}
    \begin{aligned}
    & \Pr(b(t+1) = j \mid b(t) = i<B,a(t) = 0) = \left\{ 
    \begin{array}{ll}
    1-\lambda, & j = i, \\
    \lambda, & j = i+1,\\
    0, & \mbox{otherwise.}
    \end{array}
    \right.\\
    & \Pr(b(t+1) = j \mid b(t) = B,a(t) = 0) = \left\{ 
    \begin{array}{ll}
    1, & j = B, \\
    0, & \mbox{otherwise.}
    \end{array}
    \right.
    \end{aligned}
\end{equation}
By substituting \eqref{eq-proof-belief-update-Pr-harvest} into \eqref{eq-proof-belief-update-def_a0}, we can express $\beta_j(t+1)$, for each $j \in \{0,1,\ldots,B\}$, as
\begin{equation}\label{eq-proof-belief-update-def_a0_summarised}
    \begin{aligned}
    &\beta_0(t+1) = \frac{p}{\zeta}(1-\lambda) \beta_0(t),\\
    &\beta_1(t+1) = \frac{p}{\zeta}(\lambda \beta_0(t)+(1-\lambda)\beta_1(t)),\\
    &\beta_2(t+1) = \frac{p}{\zeta}(\lambda \beta_1(t)+(1-\lambda)\beta_2(t)),\\
    & \dots,\\
    & \beta_{B-1}(t+1) = \frac{p}{\zeta}(\lambda \beta_{B-2}(t)+(1-\lambda)\beta_{B-1}(t)),\\
    & \beta_{B}(t+1) = \frac{p}{\zeta}(\lambda \beta_{B-1}(t)+\beta_{B}(t)).
    \end{aligned}
\end{equation}
Using $\sum_{j = 0}^{B}\beta_j(t+1) = 1$, we can easily calculate the normalization factor to be ${\zeta = p}$.
By rewriting \eqref{eq-proof-belief-update-def_a0_summarised} in the vector form, the updated belief is written as
$\boldsymbol{\beta}(t+1) = \boldsymbol{\Lambda} \boldsymbol{\beta}(t)$, where the matrix $\boldsymbol{\Lambda}$ is defined in \eqref{eq_matrix_lambda}. 
For the case where $a(t) = 0$ and ${o(t+1) = \{0,\min\{\Delta(t)+1,\Delta^{\mathrm{max}}\},\tilde{b}(t)\}}$, one can follow the similar steps and conclude that ${\boldsymbol{\beta}(t+1) = \boldsymbol{\Lambda} \boldsymbol{\beta}(t)}$ as well.

\subsubsection{Action $a(t) = 1$} For the case where ${a(t) = 1}$, the edge node receives a status update packet whenever $b(t) \geq 1$ and does not receive an update whenever $b(t) = 0$. In this regard, if ${b(t)  = m \geq 1}$, the next observation is either $o(t+1) = \{1,1,m\}$ or $o(t+1) = \{0,1,m\}$, ${m \in \{1,2,\dots,B\}}$; and, if $b(t) = 0$, the next observation is either $o(t+1) = \{1,\min\{\Delta(t)+1,\Delta^{\mathrm{max}}\},\tilde{b}(t)\}$ or $o(t+1) = \{0,\min\{\Delta(t)+1,\Delta^{\mathrm{max}}\},\tilde{b}(t)\}$.
We next calculate the belief update function for these cases. Starting with the case where $a(t) = 1$ and ${o(t+1) = \{1,\min\{\Delta(t)+1,\Delta^{\mathrm{max}}\},\tilde{b}(t)\}}$, by \eqref{eq-proof-belief-update-def}, we have
\begin{equation}\label{eq-proof-belief-update-def_a1_o10}
    \begin{aligned}
    \beta_j(t+1)  & =  \frac{1}{\zeta}\sum_{i = 0}^{B} \beta_i(t) \Pr(b(t+1) = j \mid b(t) = i,a(t) = 1) \Pr(o(t+1)\mid b(t+1) = j, b(t) = i, \phi^\mathrm{c}(t),a(t) = 1) \\
    & = \frac{1}{\zeta} \beta_0(t) \Pr(b(t+1) = j \mid b(t) = 0,a(t) = 1) \underbrace{\Pr(o(t+1)\mid b(t+1) = j, b(t) = 0, \phi^\mathrm{c}(t),a(t) = 1)}_{\overset{(a)}{=} p }  \\
    & + \frac{1}{\zeta} \sum_{i = 1}^{B} \beta_i(t) \Pr(b(t+1) = j \mid b(t) = i,a(t) = 1) \underbrace{\Pr(o(t+1)\mid b(t+1) = j, b(t) = i, \phi^\mathrm{c}(t),a(t) = 1)}_{\overset{(b)}{=} 0} \\ & = \frac{p\beta_0(t)}{\zeta}  \Pr(b(t+1) = j \mid b(t) = 0,a(t) = 1) \\ 
    & = \left\{ 
    \begin{array}{ll}
    \frac{p\beta_0(t)}{\zeta} (1-\lambda),&  j = 0,\\
    \frac{p\beta_0(t)}{\zeta} \lambda,&  j = 1,\\
    0,& j = 2,\ldots,B,
    \end{array}
    \right.
\end{aligned}
\end{equation}
where $(a)$ follows similarly as \eqref{eq-proof-belief-update-prob-o1-a0} and  $(b)$ follows from
\begin{equation}\label{eq-proof-belief-update-prob-o11-a1}
    \begin{aligned}
     & \Pr\Big(o(t+1) =  \{1,\min\{\Delta(t)+1,\Delta^{\mathrm{max}}\},\tilde{b}(t)\} \mid   b(t+1)= j, b(t) = i \geq 1, \phi^\mathrm{c}(t) ,a(t)=1\Big)=\\
     & \Pr(r(t+1) = 1) \underbrace{\Pr\big(\Delta(t+1) = \min\{\Delta(t)+1,\Delta^{\mathrm{max}}\},\tilde{b}(t+1) = \tilde{b}(t)  \mid b(t+1)= j, b(t) = i \geq 1, \phi^\mathrm{c}(t) ,a(t)=1\big)}_{\overset{(c)}{=}0} = 0,
    \end{aligned}\notag
\end{equation}
where $(c)$ follows because the edge node receives a status update packet whenever $a(t) = 1$ and $b(t) \geq 1$, and thus, $\Delta(t+1) = 1$ (see \eqref{eq_AoI}).
Using $\sum_{j = 0}^{B}\beta_j(t+1) = 1$, the normalization factor can readily be derived to be ${\zeta = p\beta_0(t)}$.
By rewriting \eqref{eq-proof-belief-update-def_a1_o10} in the vector form, we conclude that $\boldsymbol{\beta}(t+1) = \boldsymbol{\rho}^0$, where the vector $\boldsymbol{\rho}^0$ is defined in \eqref{eq_vectors_rho}. For the case where $a(t) = 1$ and ${o(t+1) = \{0,\min\{\Delta(t)+1,\Delta^{\mathrm{max}}\},\tilde{b}(t)\}}$, one can follow the similar steps and conclude that $\boldsymbol{\beta}(t+1) = \boldsymbol{\rho}^0$ as well.

For the cases where $a(t) = 1$ and ${o(t+1) = \{1,1,m\}}$, $m \in  \{1,2,\dots,B\}$, by \eqref{eq-proof-belief-update-def}, we have
\begin{equation}\label{eq-proof-belief-update-def_a1_m_summarised}
    \begin{aligned}
    & \beta_j(t+1)  =  \frac{1}{\zeta}\sum_{i = 0}^{B} \beta_i(t) \Pr(b(t+1) = j \mid b(t) = i,a(t) = 1) \underbrace{\Pr(o(t+1)\mid b(t+1) = j, b(t) = i, \phi^\mathrm{c}(t),a(t) = 1)}_{\overset{(a)}{=} p \mathds{1}_{\{i=m\}}} \\ & = \frac{p\beta_m(t)}{\zeta}  \Pr(b(t+1) = j \mid b(t) = m,a(t) = 1) \\ 
    & = \left\{ 
    \begin{array}{ll}
    \frac{p\beta_m(t)}{\zeta} (1-\lambda),&  j = m-1,\\
    \frac{p\beta_m(t)}{\zeta} \lambda,&  j = m,\\
    0,& \mbox{otherwise.}
    \end{array}
    \right.
\end{aligned}
\end{equation}
where $\mathds{1}_{\{\cdot\}}$ is the indicator function and $(a)$ follows from
\begin{equation}
    \begin{aligned}
     & \Pr\Big(o(t+1) =  \{1,1,m\} \mid   b(t+1)= j, b(t) = i, \phi^\mathrm{c}(t) ,a(t)=1\Big)=\\
     & \underbrace{\Pr(r(t+1) = 1)}_{= p} \underbrace{\Pr\big(\Delta(t+1) = 1,\tilde{b}(t+1) = m \geq 1  \mid b(t+1)= j, b(t) = i, \phi^\mathrm{c}(t) ,a(t)=1\big)}_{\overset{(b)}{=} \mathds{1}_{\{i=m\}}} = p \mathds{1}_{\{i=m\}}.
    \end{aligned}\notag
\end{equation}
where $(b)$ follows because the edge node receives a status update packet whenever $a(t) = 1$ and $b(t) \geq 1$, and thus, $\tilde{b}(t+1) = b(t)$ (see Section~\ref{sec_system_model_partial_battery}).
Using $\sum_{j = 0}^{B}\beta_j(t+1) = 1$, the normalization factor is derived as $\zeta = p\beta_m(t)$.
Therefore, by \eqref{eq-proof-belief-update-def_a1_m_summarised}, we have $\boldsymbol{\beta}(t+1) = \boldsymbol{\rho}^m$, where the vectors $\boldsymbol{\rho}^m$, $m \in \{1,2,\dots,B\}$, are defined in \eqref{eq_vectors_rho}. For the cases where $a(t) = 1$ and ${o(t+1) = \{0,1,m\}}$, $m \in  \{1,2,\dots,B\}$, one can follow the similar steps and conclude that $\boldsymbol{\beta}(t+1) = \boldsymbol{\rho}^m$.

\subsection{Proof of Theorem \ref{theorem_bellman_eq}}\label{sec_appendix_theorem_bellman_eq}

By rewriting the Bellman equation for the average cost POMDP \cite[Chapter~7.1]{sigaud2013markov}, \cite[Chapter~7]{krishnamurthy2016partially}, we have
\begin{equation}\notag
    \bar{C}^* + h(z) = \min_{a \in \mathcal{A}}[c(z,a) + \sum_{o^\prime}\Pr(o^\prime \mid z,a)h(z^\prime)],~ z \in \mathcal{Z},
\end{equation}
where $c(z,a)$ is the expected immediate cost obtained by choosing action $a$ in belief-state $z$, ${z = (\boldsymbol{\beta},o) = (\boldsymbol{\beta},r,\Delta,\tilde{b})}$ is the current belief state, ${o^\prime = (r^\prime,\Delta^\prime,\tilde{b}^\prime)}$ is the observation given action $a$, and ${z^\prime = (\tau(\boldsymbol{\beta},o^\prime,a),r^\prime,\Delta^\prime,\tilde{b}^\prime)}$ is the next belief state given action $a$ and observation $o^\prime$.
By defining an action-value function as $Q(z,a)\triangleq c(z,a) + \sum_{o^\prime}\Pr(o^\prime \mid z,a)h(z^\prime)$, we have
\begin{equation}\label{eq-proof-bellman-Qfunc}
\begin{aligned}
    Q(z,a) & =   c(z,a) +  \sum_{o^\prime}\Pr(o^\prime \mid z,a)h(z^\prime)\\
    & = \sum_{s} \Pr(\underbrace{s}_{b,s^\mathrm{v}} \mid z) c(s,a) + \sum_{o^\prime}\sum_{s} \underbrace{\Pr(o^\prime,s \mid z,a)}_{\Pr(s\mid z,a)\Pr(o^\prime \mid s,z,a)} h(z^\prime) \\
    & = \sum_{b} \sum_{s^\mathrm{v}} \underbrace{\Pr(b,s^\mathrm{v} \mid \boldsymbol{\beta},o)}_{\mathds{1}_{\{s^\mathrm{v} = o\}}\Pr(b\mid\boldsymbol{\beta},s^\mathrm{v},o)} c(b,s^\mathrm{v},a) + \sum_{r^\prime}\sum_{\Delta^\prime}\sum_{\tilde{b}^\prime}\sum_{b} \sum_{s^\mathrm{v}} \underbrace{\Pr(b,s^\mathrm{v}\mid \boldsymbol{\beta},o,a)}_{\mathds{1}_{\{s^\mathrm{v} = o\}}\Pr(b\mid\boldsymbol{\beta},s^\mathrm{v},o,a)}\Pr(r^\prime,\Delta^\prime,\tilde{b}^\prime \mid b,s^\mathrm{v},\boldsymbol{\beta},o,a) h(z^\prime)\\
    & = \sum_{b} \beta_b c(b,o,a) + \sum_{r^\prime}\sum_{\Delta^\prime}\sum_{\tilde{b}^\prime}\sum_{b} \beta_b \Pr(r^\prime,\Delta^\prime,\tilde{b}^\prime\mid b,\underbrace{r,\Delta,\tilde{b}}_{o},a) h(z^\prime)\\
    & = \sum_{b} \beta_b c(b,o,a) + \sum_{b} \beta_b \sum_{r^\prime}\Pr(r^\prime) \sum_{\Delta^\prime} \sum_{\tilde{b}^\prime} \Pr(\Delta^\prime,\tilde{b}^\prime\mid b,\Delta,\tilde{b},a) h(z^\prime),
\end{aligned}
\end{equation}
where $s^\mathrm{v} = \{r,\Delta,\tilde{b}\}$ is the visible part of the state, which is equivalent to the observation $o$ (i.e., $o = s^\mathrm{v}$).
For the case where $a = 0$, by Proposition~\ref{lemma_beleief_update}, we have ${z^\prime = (\tau(\boldsymbol{\beta},o^\prime,0),r^\prime,\Delta^\prime,\tilde{b}^\prime) = (\boldsymbol{\Lambda}\boldsymbol{\beta},r^\prime,\min\{\Delta+1,\Delta^\mathrm{max}\},\tilde{b}^\prime)}$, and thus, $Q(z,a = 0)$ is expressed as
\begin{equation}\label{eq_bellman_action0}
\begin{aligned}
    Q(z,0)  & = Q(\boldsymbol{\beta},r,\Delta,\tilde{b},0)  = \sum_{b} \beta_b \underbrace{c(b,r,\Delta,\tilde{b},a = 0)}_{r \min\{\Delta+1,\Delta^\mathrm{max}\}} + \sum_{b} \beta_b \sum_{r^\prime}\Pr(r^\prime) \sum_{\Delta^\prime} \sum_{\tilde{b}^\prime} \underbrace{\Pr(\Delta^\prime,\tilde{b}^\prime\mid b,\Delta,\tilde{b},a = 0)}_{\mathds{1}_{\{\Delta^\prime = \min\{\Delta+1,\Delta^\mathrm{max}\},\tilde{b}^\prime = \tilde{b}\}}} h(z^\prime) \\ & =   r \min\{\Delta+1,\Delta^\mathrm{max}\} \underbrace{\sum_{b} \beta_b}_{=1} + \sum_{r^\prime}\underbrace{[r^\prime p+ (1-r^\prime)(1-p)]}_{\Pr(r^\prime)} h(\boldsymbol{\Lambda}\boldsymbol{\beta},r^\prime,\min\{\Delta+1,\Delta^\mathrm{max}\},\tilde{b}) \underbrace{\sum_{b} \beta_b}_{=1}\\ & =   r \min\{\Delta+1,\Delta^\mathrm{max}\} + \sum_{r^\prime}[r^\prime p+ (1-r^\prime)(1-p)] h(\boldsymbol{\Lambda}\boldsymbol{\beta},r^\prime,\min\{\Delta+1,\Delta^\mathrm{max}\},\tilde{b})
\end{aligned}\notag
\end{equation}
For the case where $a = 1$, $Q(z,a = 1)$ is expressed as
\begin{equation}\label{eq_bellman_action1}
\begin{aligned}
    Q(z,1)  & = Q(\boldsymbol{\beta},r,\Delta,\tilde{b},1) = \sum_{b = 0}^{B} \beta_b c(b,r,\Delta,\tilde{b},a = 1) + \sum_{b = 0}^{B} \beta_b \sum_{r^\prime}\Pr(r^\prime) \sum_{\Delta^\prime} \sum_{\tilde{b}^\prime} \Pr(\Delta^\prime,\tilde{b}^\prime\mid b,\Delta,\tilde{b},a = 1) h(z^\prime) \\ & = \beta_0 \underbrace{c(b = 0,r,\Delta,\tilde{b},a = 1)}_{r \min\{\Delta+1,\Delta^\mathrm{max}\}} + \sum_{b = 1}^{B} \beta_b \underbrace{c(b \geq 1,r,\Delta,\tilde{b},a = 1)}_{r} +\beta_0 \sum_{r^\prime}\Pr(r^\prime) \sum_{\Delta^\prime} \sum_{\tilde{b}^\prime} \underbrace{\Pr(\Delta^\prime,\tilde{b}^\prime\mid b = 0,\Delta,\tilde{b},a =1)}_{\mathds{1}_{\{\Delta^\prime = \min\{\Delta+1,\Delta^\mathrm{max}\},\tilde{b}^\prime = \tilde{b}\}}} h(z^\prime) \\ &  + \sum_{b = 1}^{B} \beta_b \sum_{r^\prime}\Pr(r^\prime) \sum_{\Delta^\prime} \sum_{\tilde{b}^\prime} \underbrace{\Pr(\Delta^\prime,\tilde{b}^\prime\mid b\geq 1,\Delta,\tilde{b},a=1)}_{\mathds{1}_{\{\Delta^\prime = 1 ,\tilde{b}^\prime = b\}}} h(z^\prime) \\ & = r \beta_0 \min\{\Delta+1,\Delta^\mathrm{max}\} + r \underbrace{\sum_{b = 1}^{B} \beta_b}_{1-\beta_0}  \\ & + \beta_0 \sum_{r\prime} [r^\prime p+ (1-r^\prime)(1-p)] h(\underbrace{\tau(\boldsymbol{\beta},r^\prime,\min\{\Delta+1,\Delta^\mathrm{max}\},\tilde{b},a = 1)}_{\overset{(a)}{=}\boldsymbol{\rho}^0},r^\prime,\min\{\Delta+1,\Delta^\mathrm{max}\},\tilde{b}) \\ & + \sum_{b = 1}^{B} \beta_b \sum_{r^\prime}[r^\prime p+ (1-r^\prime)(1-p)] h(\underbrace{\tau(\boldsymbol{\beta},r^\prime,1,b,a = 1)}_{\overset{(a)}{=}\boldsymbol{\rho}^b},r^\prime,1,b)\\ & = r \beta_0 \min\{\Delta+1,\Delta^\mathrm{max}\} + r (1-\beta_0) + \beta_0 \sum_{r^\prime = 0}^{1} [r^\prime p+ (1-r^\prime)(1-p)] h(\boldsymbol{\rho}^0,r^\prime,\min\{\Delta+1,\Delta^\mathrm{max}\},\tilde{b}) \\ & + \sum_{b = 1}^{B} \beta_b \sum_{r^\prime = 0}^{1}[r^\prime p+ (1-r^\prime)(1-p)] h(\boldsymbol{\rho}^b,r^\prime,1,b), 
\end{aligned}\notag
\end{equation}
where $(a)$ follows from Proposition~\ref{lemma_beleief_update}.

\subsection{Proof of Theorem \ref{theorem_structure_v_part1}}\label{sec_appendix_theorem_structure_v_part1}

\newcommand\Deltabar{\underline{\Delta}}
\newcommand\zbar{\underline{z}}
\newcommand\bbar{\underline{\tilde{b}}}

We consider two belief-states $z = (\boldsymbol{\beta},r,\Delta,\tilde{b})$ and $\zbar = (\boldsymbol{\beta},r,\Delta,\bbar)$, $\tilde{b} \leq \bbar$, and prove that  i) $V(z) \leq V(\zbar)$ and ii) $V(z) \geq V(\zbar)$. Therefore, we conclude that $V(z) = V(\zbar)$.

(i) Since the sequence $\{V^{(i)}(z)\}_{{i=1,2,\ldots}}$ converges to $V(z)$ for any initialization, it suffices to prove that $V^{(i)}(\zbar) \geq V^{(i)}(z)$, $\forall{i}$, which is shown using mathematical induction.
The  initial values are selected arbitrarily, e.g., $V^{(0)}(z) = 0$ and $V^{(0)}(\zbar) = 0$, hence, $V^{(i)}(z) \leq V^{(i)}(\zbar)$ holds for $i = 0$. 
Assume that
$V^{(i)}(z) \leq V^{(i)}(\zbar)$ for some $i$; we need to prove that ${V^{(i+1)}(z) \leq V^{(i+1)}(\zbar)}$.
Let us denote an optimal action in belief-state $z$ at iteration $i = 1,2,\dots$ by $\pi^{(i)}(z)$, which is  given by $\pi^{(i)}(z) = \argmin_{a}Q^{(i)}(z,a)$. 
Thus, we have 
\begin{equation}
\begin{array}{ll}
 V^{(i+1)}(z) - V^{(i+1)}(\zbar) & = \min_{a} Q^{(i+1)}(z,a) -  \min_{a} Q^{(i+1)}(\zbar,a)  \\ & =
Q^{(i+1)}(z,\pi^{(i+1)}(z)) -  Q^{(i+1)}(\zbar,\pi^{(i+1)}(\zbar)) \\ & \overset{(a)}{\leq} Q^{(i+1)}(z,\pi^{(i+1)}(\zbar)) -  Q^{(i+1)}(\zbar,\pi^{(i+1)}(\zbar)),\notag
\end{array}
\end{equation}
where $(a)$ follows from the fact that taking action $\pi^{(i+1)}(\zbar)$ in state $z$ is not necessarily optimal. We show that  ${Q^{(i+1)} (z,\pi^{(i+1)}(\zbar))-  Q^{(i+1)}(\zbar,\pi^{(i+1)}(\zbar)) \leq 0}$ for all possible actions ${\pi^{(i+1)} (\zbar)\in \{0,1\}}$. For the case where $\pi^{(i+1)}(\zbar) = 0$, we have 
\begin{equation}\label{eq_proof_Q0}
\begin{array}{ll}
    & Q^{(i+1)}(z,0) -  Q^{(i+1)}(\zbar,0)  = \\  
    &\textstyle\sum_{r^\prime = 0}^{1} [r^\prime p + (1-r^\prime)(1-p)] \underbrace{[V^{(i)}(\boldsymbol{\Lambda} \boldsymbol{\beta},r^\prime,\min\{\Delta+1,\Delta^{\mathrm{max}}\}, \tilde{b}) - V^{(i)}(\boldsymbol{\Lambda} \boldsymbol{\beta},r^\prime,\min\{\Delta+1,\Delta^{\mathrm{max}}\}, \bbar)]}_{(a)\leq 0} \leq 0,
    \end{array}
\end{equation}
where step $(a)$ follows from the induction assumption.
For the case where $\pi^{(i+1)}(\zbar) = 1$, we have
\begin{equation}\label{eq_proof_Q1}
\begin{array}{ll}
    & Q^{(i+1)}(z,1) -  Q^{(i+1)}(\zbar,1)  = \\ 
    &\beta_0 \textstyle\sum_{r^\prime = 0}^{1}[r^\prime p + (1-r^\prime)(1-p)]\underbrace{[ V^{(i)}(\boldsymbol{\rho}^{0},r^\prime,\min\{\Delta+1,\Delta^{\mathrm{max}}\},\tilde{b}) - V^{(i)}(\boldsymbol{\rho}^{0},r^\prime,\min\{\Delta+1,\Delta^{\mathrm{max}}\},\bbar)}_{(a) \leq 0} \leq 0,
    \end{array}
\end{equation}
where step $(a)$ follows from the induction assumption.

(ii) Similarly, we show that $V^{(i)}(\zbar) \leq V^{(i)}(z)$, $\forall i$, which is shown using mathematical induction.
The  initial values are selected arbitrarily, e.g., $V^{(0)}(z) = 0$ and $V^{(0)}(\zbar) = 0$, hence, $V^{(i)}(\zbar) \leq V^{(i)}(z)$ holds for $i = 0$. 
Assume that
$V^{(i)}(\zbar) \leq V^{(i)}(z)$ for some $i$; we need to prove that ${V^{(i+1)}(\zbar) \leq V^{(i+1)}(z)}$.
We have 
\begin{equation}
\begin{array}{ll}
 V^{(i+1)}(\zbar) - V^{(i+1)}(z) & = 
 \min_{a} Q^{(i+1)}(\zbar,a) -  \min_{a} Q^{(i+1)}(z,a) \\
 & = Q^{(i+1)}(\zbar,\pi^{(i+1)}(\zbar)) -  Q^{(i+1)}(z,\pi^{(i+1)}(z))  \\ &\overset{(a)}{\leq} Q^{(i+1)}(\zbar,\pi^{(i+1)}(z)) -  Q^{(i+1)}(z,\pi^{(i+1)}(z)),\notag
\end{array}
\end{equation}
where $(a)$ follows from the fact that taking action $\pi^{(i+1)}(z)$ in belief-state $\zbar$ is not necessarily optimal. We show that  ${Q^{(i+1)} (\zbar,\pi^{(i+1)}(z))-  Q^{(i+1)}(z,\pi^{(i+1)}(z)) \leq 0}$ for all possible actions ${\pi^{(i+1)} (z)\in \{0,1\}}$. For the case where $\pi^{(i+1)}(z) = 0$, we have 
\begin{equation}\label{eq_proof_Q0_case2}
\begin{array}{ll}
    & Q^{(i+1)}(\zbar,0) -  Q^{(i+1)}(z,0)  = \\  
    &\textstyle\sum_{r^\prime = 0}^{1} [r^\prime p + (1-r^\prime)(1-p)] \underbrace{[V^{(i)}(\boldsymbol{\Lambda} \boldsymbol{\beta},r^\prime,\min\{\Delta+1,\Delta^{\mathrm{max}}\}, \bbar) - V^{(i)}(\boldsymbol{\Lambda} \boldsymbol{\beta},r^\prime,\min\{\Delta+1,\Delta^{\mathrm{max}}\}, \tilde{b})]}_{(a)\leq 0} \leq 0,
    \end{array}
\end{equation}
where step $(a)$ follows from the induction assumption.
For the case where $\pi^{(i+1)}(\zbar) = 1$, we have 
\begin{equation}\label{eq_proof_Q1_case2}
\begin{array}{ll}
    & Q^{(i+1)}(\zbar,1) -  Q^{(i+1)}(z,1)  = \\ 
    &\beta_0 \textstyle\sum_{r^\prime = 0}^{1}[r^\prime p + (1-r^\prime)(1-p)]\underbrace{[ V^{(i)}(\boldsymbol{\rho}^{0},r^\prime,\min\{\Delta+1,\Delta^{\mathrm{max}}\},\bbar) - V^{(i)}(\boldsymbol{\rho}^{0},r^\prime,\min\{\Delta+1,\Delta^{\mathrm{max}}\},\tilde{b})}_{(a) \leq 0} \leq 0,
    \end{array}
\end{equation}
where in step $(a)$ follows from the induction assumption.

\subsection{Proof of Lemma \ref{lemm_struct_LAMBDA}}\label{sec_appendix_lemm_struct_LAMBDA}
We prove this lemma by mathematical induction. For $m = 1$, we have 
\begin{equation}
\boldsymbol{\Lambda} = 
\begin{pmatrix}
1-\lambda & 0 & \cdots & 0 & 0 \\
\lambda & 1-\lambda & \cdots & 0 & 0\\
\vdots  & \vdots  & \ddots & \vdots & \vdots  \\
0 & 0 & \cdots & 1-\lambda & 0\\
0 & 0 & \cdots & \lambda & 1
\end{pmatrix},
\end{equation}
and hence, the lemma holds for $m = 1$. Assume that the lemma holds for some $m$, i.e., $\boldsymbol{\Lambda}^m$ has the form as shown below
\begin{equation}\label{eq_matrix_lambda_power_m}
\boldsymbol{\Lambda}^m = 
\begin{pmatrix}
(1-\lambda)^m & 0 & \cdots & 0 & 0 \\
m\lambda(1-\lambda)^{m-1} & (1-\lambda)^m & \cdots & 0 & 0\\
\frac{m(m-1)}{2!}\lambda^2(1-\lambda)^{m-2} & m\lambda(1-\lambda)^{m-1} & \cdots & 0 & 0\\
\frac{m(m-1)(m-2)}{3!}\lambda^3(1-\lambda)^{m-3} & \frac{m(m-1)}{2}\lambda^2(1-\lambda)^{m-2} & \cdots & 0 & 0\\
\vdots  & \vdots & \ddots & \vdots & \vdots  \\
\lambda^{B-1}(1-\lambda)^{m-B+1}\prod_{\nu = 0}^{B-2} \frac{(m-\nu)}{\nu+1}& \lambda^{B-2}(1-\lambda)^{m-B+2}\prod_{\nu = 0}^{B-3} \frac{(m-\nu)}{\nu+1}&  \cdots   & (1-\lambda)^m& 0\\
1- \sum_{j^\prime}\Lambda_{j^\prime,1} & 1- \sum_{j^\prime}\Lambda_{j^\prime,2} & \cdots & 1-(1-\lambda)^m & 1
\end{pmatrix}.
\end{equation}
We prove that the lemma also holds for $m+1$. We have 
\begin{equation}\label{eq_matrix_lambda_power_m+1}
\begin{array}{ll}
& \boldsymbol{\Lambda}^{m+1} = \boldsymbol{\Lambda}^m \boldsymbol{\Lambda} \\
= &
\begin{pmatrix}
(1-\lambda)^m & 0 & \cdots & 0 & 0 \\
m\lambda(1-\lambda)^{m-1} & (1-\lambda)^m & \cdots & 0 & 0\\
\frac{m(m-1)}{2!}\lambda^2(1-\lambda)^{m-2} & m\lambda(1-\lambda)^{m-1} & \cdots & 0 & 0\\
\frac{m(m-1)(m-2)}{3!}\lambda^3(1-\lambda)^{m-3} & \frac{m(m-1)}{2}\lambda^2(1-\lambda)^{m-2} & \cdots & 0 & 0\\
\vdots  & \vdots & \ddots & \vdots & \vdots  \\
\lambda^{B-1}(1-\lambda)^{m-B+1}\prod_{\nu = 0}^{B-2} \frac{(m-\nu)}{\nu+1}& \lambda^{B-2}(1-\lambda)^{m-B+2}\prod_{\nu = 0}^{B-3} \frac{(m-\nu)}{\nu+1}&  \cdots   & (1-\lambda)^m& 0\\
1- \sum_{j^\prime}\Lambda_{j^\prime,1} & 1- \sum_{j^\prime}\Lambda_{j^\prime,2} & \cdots & 1-(1-\lambda)^m & 1
\end{pmatrix} \times \\
&\begin{pmatrix}
1-\lambda & 0 & \cdots & 0 & 0 \\
\lambda & 1-\lambda & \cdots & 0 & 0\\
\vdots  & \vdots  & \ddots & \vdots & \vdots  \\
0 & 0 & \cdots & 1-\lambda & 0\\
0 & 0 & \cdots & \lambda & 1
\end{pmatrix}  \\
=&
\begin{pmatrix}
(1-\lambda)^{m+1} & 0 & \cdots & 0 & 0 \\
(m+1)\lambda(1-\lambda)^{m} & (1-\lambda)^{m+1} & \cdots & 0 & 0\\
\frac{(m+1)m}{2!}\lambda^2(1-\lambda)^{m-1} & (m+1)\lambda(1-\lambda)^{m} & \cdots & 0 & 0\\
\frac{(m+1)m(m-1)}{3!}\lambda^3(1-\lambda)^{m-2} & \frac{(m+1)m}{2}\lambda^2(1-\lambda)^{m-1} & \cdots & 0 & 0\\
\vdots  & \vdots & \ddots & \vdots & \vdots  \\
\lambda^{B-1}(1-\lambda)^{m-B+2}\prod_{\nu = 0}^{B-2} \frac{(m+1-\nu)}{\nu+1}& \lambda^{B-2}(1-\lambda)^{m-B+3}\prod_{\nu = 0}^{B-3} \frac{(m-\nu+1)}{\nu+1}&  \cdots   & (1-\lambda)^{m+1}& 0\\
1- \sum_{j^\prime}\Lambda_{j^\prime,1} & 1- \sum_{j^\prime}\Lambda_{j^\prime,2} & \cdots & 1-(1-\lambda)^{m+1} & 1
\end{pmatrix}.
\end{array}
\end{equation}

\end{appendix}

\end{onecolumn}

\end{document}